\documentclass[nonacm, sigconf, screen]{acmart}

\usepackage{multirow}
\usepackage{array}
\usepackage{xcolor}
\usepackage{subcaption}
\usepackage{enumitem}
\usepackage{tabularx}

\usepackage{balance}

\usepackage{tikz}
\usetikzlibrary{arrows.meta, positioning, decorations.markings}

\usepackage[ruled, vlined, linesnumbered]{algorithm2e}
\SetKw{Return}{return}

\usepackage{mathrsfs}
\usepackage{amsmath}
\usepackage{amsthm}

\usepackage{dashrule}

\usepackage[normalem]{ulem}

\newtheorem{definition}{Definition}

\newtheorem{theorem}{Theorem}
\newtheorem{corollary}{Corollary}
\newtheorem{proposition}{Proposition}
\newtheorem{example}{Example}

\newcommand{\latent}{{\sf LatentPre}}

\begin{document}

\title{Fair Data Pre‑Processing with Imperfect Attribute Space}

\author{Ying Zheng}
\affiliation{%
  \institution{National University of Singapore}
  \country{Singapore}}
\email{zheng.ying@u.nus.edu}

\author{Yangfan Jiang}
\affiliation{%
  \institution{National University of Singapore}
  \country{Singapore}}
\email{jyangfan@u.nus.edu}

\author{Kian-Lee Tan}
\affiliation{%
  \institution{National University of Singapore}
  \country{Singapore}}
\email{tankl@comp.nus.edu.sg}

\begin{abstract}

Fair data pre-processing is a widely used strategy for mitigating bias in machine learning. A promising line of research focuses on calibrating datasets to satisfy a designed fairness policy so that sensitive attributes influence outcomes only through clearly specified legitimate causal pathways. While effective on clean and information-rich data, these methods often break down in real-world scenarios with imperfect attribute spaces, where decision-relevant factors may be deemed unusable or even missing. To address this gap, we propose {\sf LatentPre}, a novel framework that enables principled and robust fair data processing in practical settings. Instead of relying solely on observed attributes, {\sf LatentPre} augments the fairness policy with latent attributes that capture essential but subtle signals, enabling the framework to operate as if the attribute space were perfect. These latent attributes are strategically introduced to guarantee identifiability and are estimated using a tailored expectation-maximization paradigm. The raw data is then carefully refined to conform to this latent-augmented policy, effectively removing biased patterns while preserving justifiable ones. Extensive experiments demonstrate that {\sf LatentPre} consistently achieves strong fairness-utility trade-offs across diverse scenarios, advancing practical fairness-aware data management.

\end{abstract}

\maketitle

\section{Introduction} \label{sec: intro}

Machine learning (ML) has become a core component of modern data-driven systems, supporting decision-making in domains such as finance~\cite{finance2007}, education~\cite{admission1975, pals2024algorithms}, healthcare~\cite{obermeyer2019dissecting}, and public administration~\cite{policing2017, crime2009, recommend2022}. However, these advances also raise growing concerns about fairness. In practice, ML models often unintentionally learn biased patterns from the raw data maintained by database management systems (DBMSs), and these patterns may be further amplified during training, resulting in unfair or discriminatory predictions.~\cite{lin2024mitigating, zhang2025efficient, yang2024noninvasive, zheng2024fairgen, pradhan2022interpretable, surve2025explaining, tsioutsiouliklis2021fairness, tae2024falcon}.

A promising way to mitigate this issue is to apply fairness-aware data pre-processing before the data enters ML training. This ensures that models are trained on data as if collected in a fair world, rather than drawn from potentially biased distributions stored in DBMSs. Substantial research efforts have been devoted to exploring fairness criteria and developing fair data pre‑processing solutions~\cite{salimi2019interventional, pirhadi2024otclean, zheng2025causalpre, salazar2021automated, galhotra2022causal, feldman2015certifying, nabi2018fair, zemel2013learning, calmon2017optimized, gordaliza2019obtaining, zhang2023iflipper, lahoti2019ifair}. Among them, methods based on \emph{justifiable fairness}~\cite{salimi2019interventional} have received considerable attention~\cite{pirhadi2024otclean, zheng2025causalpre, salazar2021automated, galhotra2022causal, pujol2023prefair, zuo2024interventional}, due to their favorable fairness-utility trade-offs, interpretability, and practical applicability. 

Justifiable fairness restricts sensitive attributes, such as gender or race, to influencing decisions only through ethically acceptable attributes, known as admissible attributes, even if such attributes carry biased information. In contrast, other attributes that encode bias, termed inadmissible attributes, are not allowed to affect decisions. To instantiate this notion in data pre-processing, a standard recipe is to define a \emph{fairness policy} that models a hypothetically fair world, in which only justifiably fair causal relationships remain. The raw data is then modified so that its empirical distribution conforms to this policy, as if collected from such a world.  Existing methods~\cite{salimi2019interventional, zheng2025causalpre} typically construct this policy within an attribute space that mirrors the raw DBMS schema, relying on straightforward fair representations. When applied to clean and information‑rich datasets with clearly specified admissible and inadmissible attributes, these frameworks achieve desirable fairness-utility trade-offs.  

\begin{figure}[t]
    \centering
    \includegraphics[width=0.98\linewidth, keepaspectratio]{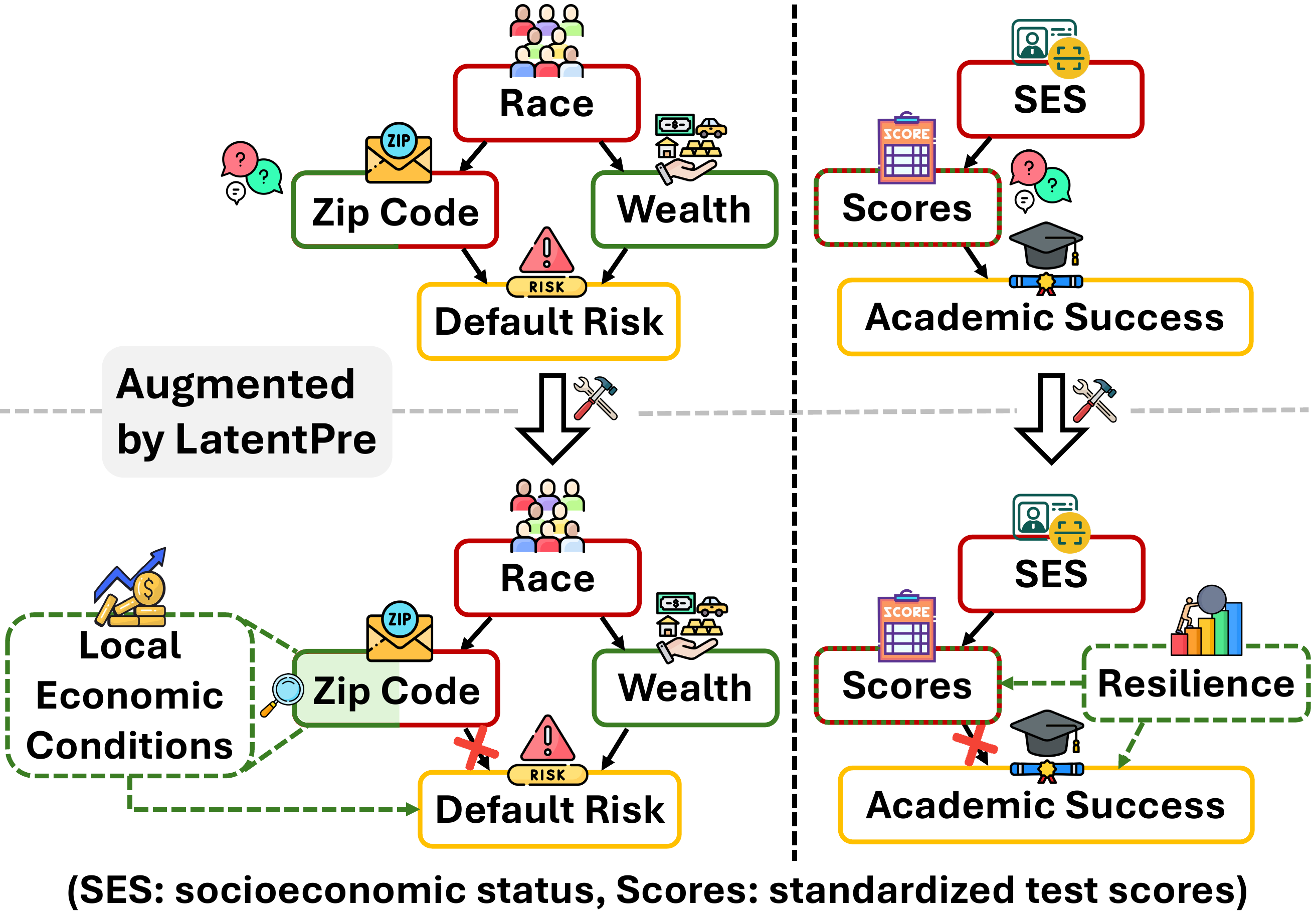}
    \caption{Attribute graphs of real-world examples.}
    \label{fig: example}
\end{figure}

\subsection{Fair Data Pre-Processing in the Wild}\label{subsec:intro-motivation}

In real-world DBMSs, however, raw data often comes with an imperfect attribute space. One common challenge in such settings is \textbf{attribute ambiguity}, where some attributes have unclear roles, making it difficult even for domain experts to decide whether they should be treated as admissible or inadmissible. When such attributes are conservatively treated as inadmissible to avoid potential bias, existing methods typically block their influence on the decision. This operation can substantially undermine data utility, as valuable predictive signals are removed during processing. Another challenge is \textbf{attribute absence}, where certain utility-relevant factors are not recorded in the schema due to collection constraints or their abstract nature. Although the effect of such factors may be partially reflected in observed attributes, the factors themselves are not explicitly present. When the corresponding information is entangled with inadmissible attributes, existing methods may also suppress the resulting signals, even when these signals are legitimate. As a result, the already limited data utility can be further compromised after fairness-aware processing, and the processed data may no longer support effective decision-making. To make these challenges concrete, we illustrate both sources of imperfection with the following examples.
\begin{example}[Attribute Ambiguity] \label{eg-1}
    Consider a loan default prediction task. As shown in Figure~\ref{fig: example} (top left), race is sensitive, wealth is admissible, and default risk is the outcome. The role of zip code, however, is ambiguous. It correlates with race and may therefore reflect historical or structural bias, while it can also serve as a proxy for local economic conditions relevant to an applicant's financial stability. For example, in a town whose economy depends on a single large factory, news of the factory's impending shutdown can sharply increase residents’ future financial uncertainty. In this case, zip code becomes a forward-looking risk indicator beyond what current wealth alone can reveal. If practitioners conservatively treat zip code as inadmissible, existing fairness-aware methods typically block their influence on the decision, which can reduce data utility.
\end{example}
\begin{example}[Attribute Absence] \label{eg-2}
    Consider a university scholarship allocation task that aims to predict academic success. As shown in Figure~\ref{fig: example} (top right), socioeconomic status (SES) is sensitive, academic success is the outcome, and standardized test scores (scores) are often deemed inadmissible because they are heavily influenced by SES. Students from high-SES backgrounds typically benefit from better preparation resources and repeated test attempts, so high scores do not always reflect ability alone.
    For students from low-SES backgrounds, a high score is often achieved under substantial challenges and may therefore reflect resilience, an unobserved trait that is highly predictive of academic success.
    Because resilience is not recorded in the data, existing methods cannot capture it, thereby overlooking such students’ potential and weakening prediction.
\end{example}
As these examples show, imperfections are inherent in real-world DBMS data. Existing fair data pre-processing methods are not designed for such cases and can fail when applied in practice, as shown in Section~\ref{subsec: exp-endtoend}.
Simple remedies such as imputing missing values or applying data augmentation do not account for fairness and may even amplify existing biases~\cite{pessach2022review}.
These limitations reveal a critical barrier to practically deploying fair data pre-processing in DBMSs and motivate the need for robust approaches that can operate under imperfect attribute spaces while still ensuring justifiable fairness and preserving data utility.

\subsection{Contributions} \label{subsec: intro-contribution}

Motivated by this, we present \latent{}, a justifiably fair data pre-processing framework that can robustly capture informative signals from raw data to support reliable and effective downstream decision-making, even in DBMS environments with an imperfect attribute space. To achieve this, \latent\ follows the standard \emph{policy-then-adjust} paradigm, while departing from prior solutions through a fundamentally different policy formulation and a corresponding adjustment procedure. The key idea is to augment the hypothetically fair world with carefully introduced \emph{latent attributes} that represent essential but unrecorded factors, rather than constructing it solely from the raw DBMS attribute space. This latent augmentation compensates for practical imperfections in the observed schema and enables \latent\ to recover and leverage predictive signals in a manner consistent with justifiable fairness.

We revisit Examples~\ref{eg-1} and~\ref{eg-2} to illustrate this key idea. In Example~\ref{eg-1}, \latent\ introduces a latent attribute, as shown in the bottom-left panel of Figure~\ref{fig: example}, to represent local economic conditions associated with zip code, thereby retaining forward-looking information about financial uncertainty. In Example~\ref{eg-2}, \latent\ similarly introduces a latent attribute, as shown in the bottom-right panel of Figure~\ref{fig: example}, to represent resilience, a legitimate factor underlying test scores that is otherwise absent from the schema.
In both cases, \latent\ blocks the unfair pathway from the inadmissible attribute to the label and preserves utility through a latent attribute that is independent of sensitive attributes. As a result, it can reliably use potential decision-relevant information even when the attribute space is imperfect.

\vspace{2mm}
\noindent\textbf{Challenges.}
While latent attributes are a well-established modeling tool, incorporating them into fair data pre-processing, especially under an imperfect attribute space, presents two significant challenges. First, determining how to introduce latent attributes and model their relationships with observed attributes is non-trivial, as they must reliably capture fair yet decision-relevant signals that are often only implicitly encoded in the raw data via inadmissible attributes. Second, once the policy includes unobserved latent attributes, existing fair data adjustment techniques no longer apply.
To our knowledge, no prior work has formally addressed these challenges in the pre-processing setting.

\vspace{2mm}
\noindent\textbf{Technical overview.}
\latent{} is built on a set of tightly integrated technical components designed to address the above challenges. At the policy level, it augments the fair world with latent attributes $L$ that influence a strategically selected subset of {inadmissible attributes $\mathcal{I}_c$} and the label $Y$, while remaining independent of sensitive attributes $\mathcal{S}$. Intuitively, $L$ models valuable predictive factors that are not recorded in the explicit schema but are indirectly reflected in the raw data through inadmissible attributes, which prior solutions cannot fully exploit without violating fairness constraints. This policy design enables \latent\ to uncover and leverage such legitimate signals while maintaining justifiable fairness. 

However, a naive approach may yield little benefit if $L$ is poorly placed or modeled. Without careful design, the policy formulation may not be provably effective, and the latent attributes could remain uninformative, introducing arbitrary or redundant structure. In essence, modeling $L$ requires enough observable constraints to uniquely determine its effect, analogous to solving for unknowns requiring enough independent equations. In latent causal models, this requirement is captured by \emph{identifiability}~\cite{allman2009identifiability}. To avoid such degeneracy, we enforce identifiability through a principled placement strategy, which provably guarantees that the inserted latent attributes capture genuine decision‑relevant signals.  Specifically, our strategy positions $L$ to intervene on $\mathcal{I}_c$ and $Y$, combined with a local pruning mechanism that minimally reshapes the causal structure surrounding $\mathcal{I}_c$ to meet identifiability conditions.

A remaining practical concern is that the augmented policy contains unobserved attributes, making direct data adjustment infeasible. To address this, we formulate the data adjustment process as a parameter estimation problem involving latent attributes, and efficiently solve it using an expectation‑maximization (EM) paradigm tailored to the structure of our policy. Once the parameters are estimated, the optimal empirical distribution conforming to the policy is determined, and the raw data can then be adjusted accordingly to complete the fairness-aware processing.

Extensive experiments verify that \latent\ consistently achieves strong performance across diverse scenarios, demonstrating robustness to imperfect and complex real‑world data environments.

\vspace{2mm}
\noindent\textbf{Roadmap.}
The remainder of this paper is organized as follows. Section~\ref{sec: preliminary} introduces the required notions and background. Section~\ref{sec: problem} formalizes the fair data pre-processing problem under an imperfect attribute space. Sections~\ref{sec: methodology} and~\ref{sec: component} then present our proposed solution, \latent. Section~\ref{sec: exp} provides a comprehensive evaluation, examining the behavior of \latent\ across various scenarios. Finally, Section~\ref{sec: related} reviews related work, and Section~\ref{sec: conclusion} concludes the paper.

\section{Preliminaries} \label{sec: preliminary}

\subsection{Causal DAGs} \label{subsec: pre-causal-dags}
A causal DAG $\mathcal{G}$ is a directed acyclic graph over attributes $\mathcal{V}=\{V_1,\ldots,V_d\}$ that encodes direct causal relationships among them. Each node corresponds to an attribute (viewed as a random variable in the underlying distribution), and each directed edge represents a direct causal effect. For each $V_i$, let $\Pi_i$ denote its parent set in $\mathcal{G}$. The model associates $V_i$ with a conditional distribution $\mathbb{P}[V_i \mid \Pi_i; \theta_i]$, where $\theta_i$ parameterizes the conditional probability table (CPT) of $V_i$ given $\Pi_i$. The joint distribution over $\mathcal{V}$ then admits the standard Bayesian network factorization~\cite{pearl2009causality}:
\begin{align} \label{eq: bayes}
    \mathbb{P} \left[ \mathcal{V}; \theta \right] = \mathbb{P} \left[ V_1, \dots, V_d; \theta \right] 
    = \prod_{i=1}^{d} \mathbb{P} \left[ V_i \mid \Pi_i; \theta_i \right],
\end{align}
where $\theta = (\theta_1, \dots, \theta_d)$ collects the parameters for all CPTs.

Beyond factorization, causal DAGs encode conditional independencies through d-separation~\cite{pearl2009causality}. For disjoint node sets $X,Y,Z\subseteq \mathcal{V}$, $X$ and $Y$ are \emph{d-separated} by $Z$ in $\mathcal{G}$ if every path between $X$ and $Y$ is blocked when conditioning on $Z$. A distribution $\mathbb{P}$ over $\mathcal{V}$ is \emph{Markov compatible} with $\mathcal{G}$ if every d-separation in $\mathcal{G}$ implies the corresponding conditional independence in $\mathbb{P}$. Conversely, $\mathbb{P}$ is \emph{faithful} to $\mathcal{G}$ if every conditional independence in $\mathbb{P}$ is implied by a d-separation in $\mathcal{G}$. Following prior work~\cite{salimi2019interventional, galhotra2022causal, pujol2023prefair}, we assume that the observed distribution $\mathbb{P}$ is Markov compatible with and faithful to the underlying causal DAG.

Causal DAGs support the definition of justifiable fairness~\cite{salimi2019interventional} (introduced in the next section) through interventions, represented by the \textit{do} operator~\cite{pearl2009causality}. For an attribute $X$ and value $x$, the intervention $\text{\textit{do}}(X=x)$ forces $X$ to take value $x$ and severs all incoming causal influences into $X$. The resulting interventional distribution $\mathbb{P}[O \mid \text{\textit{do}}(X=x)]$ characterizes the causal effect of $X$ on an outcome $O$. Graphically, $\text{\textit{do}}(X=x)$ corresponds to removing all incoming edges to $X$ and fixing its value.

\subsection{Justifiable Fairness}

Consider a classifier $\mathcal{M}: \text{Dom}(\mathcal{X}) \to \text{Dom}(O)$ that maps input features $\mathcal{X}$ to an outcome variable $O$, where $\text{Dom}(\cdot)$ denotes the domain. Let $\mathcal{S}\subseteq \mathcal{X}$ represent the set of sensitive attributes; then, justifiable fairness is defined as follows. 
\begin{definition}[$\mathcal{K}$-fairness~\cite{salimi2019interventional}]
    For a subset $\mathcal{K} \subseteq \mathcal{X} \setminus \mathcal{S}$, we say that $\mathcal{M}$ is \emph{$\mathcal{K}$-fair} with respect to the sensitive set $\mathcal{S}$ if, for every assignment $\mathcal{K}=\kappa$, intervening on $\mathcal{S}$ does not affect the distribution of the outcome once $\mathcal{K}$ is fixed:
    \begin{align*} 
        &\mathbb{P} \left[ O = o \mid \text{do}(\mathcal{S} = \text{\scriptsize$\mathcal{S}$}_0), \text{do}(\mathcal{K} = \text{\scriptsize$\mathcal{K}$}) \right] \notag \\
        = \;&\mathbb{P} \left[ O = o \mid \text{do}(\mathcal{S} = \text{\scriptsize$\mathcal{S}$}_1), \text{do}(\mathcal{K} = \text{\scriptsize$\mathcal{K}$}) \right].
    \end{align*}
\end{definition}
\begin{definition}[Justifiable Fairness~\cite{salimi2019interventional}] \label{def: justifiable}
    Let $\mathcal{A}\subseteq \mathcal{X}$ denote the collection of \emph{admissible attributes}, i.e., features that are permitted to influence $O$ even if they are themselves affected by sensitive attributes. A classifier $\mathcal{M}$ is said to satisfy \emph{justifiable fairness} if it is $\mathcal{K}$-fair for every $\mathcal{K}$ with $\mathcal{A} \subseteq \mathcal{K} \subseteq \mathcal{X}$.
\end{definition}

To isolate the effect of data pre-processing from that of model training, we follow prior work~\cite{salimi2019interventional, galhotra2022causal,zheng2025causalpre} and assume a \emph{reasonable} classifier, i.e., one that closely approximates the data distribution on which it is trained.
The following corollary and proposition provide sufficient structural conditions on the data under which any reasonable classifier satisfies justifiable fairness.
\begin{corollary}[\cite{salimi2019interventional}] \label{cor:justifiable}
    Let $\mathcal{G}$ be the attribute graph derived from a dataset. If all causal pathways from sensitive attributes to the label pass through at least one admissible attribute, then any reasonable classifier trained on this dataset is regarded as justifiably fair.
\end{corollary}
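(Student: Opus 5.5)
The plan is to reduce justifiable fairness to a statement about interventional distributions in the attribute graph $\mathcal{G}$, and then use the path condition to show that intervening on $\mathcal{S}$ has no effect once the admissible attributes are fixed.

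\textbf{Step 1: replace the classifier by the data distribution.} Since a reasonable classifier closely approximates the distribution it is trained on, its outcome $O$ can be identified with the label $Y$. It therefore suffices to show that the data distribution $\mathbb{P}$, which is Markov compatible with $\mathcal{G}$, satisfies $\mathcal{K}$-fairness for every $\mathcal{K}$ with $\mathcal{A}\subseteq\mathcal{K}\subseteq\mathcal{X}$. Concretely, we must show that $\mathbb{P}[Y \mid do(\mathcal{S}=s), do(\mathcal{K}=\kappa)]$ does not depend on $s$.

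\textbf{Step 2: pass to the mutilated graph.} Fix $\mathcal{K}\supseteq\mathcal{A}$. The joint intervention $do(\mathcal{S}=s, \mathcal{K}=\kappa)$ is represented by the graph $\mathcal{G}'$, obtained from $\mathcal{G}$ by deleting all edges into $\mathcal{S}\cup\mathcal{K}$. The post-intervention distribution is given by the truncated factorization of Eq.~\eqref{eq: bayes}: drop the CPTs of the intervened nodes and substitute their fixed values elsewhere.

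\textbf{Step 3: the path argument.} In the truncated factorization, $s$ can reach the marginal of $Y$ only through CPTs of non-intervened descendants of $\mathcal{S}$. So $Y$ depends on $s$ only if $\mathcal{G}'$ contains a directed path from $\mathcal{S}$ to $Y$. Any such path is also a directed path in $\mathcal{G}$. By hypothesis it passes through some $A\in\mathcal{A}\subseteq\mathcal{K}$. In $\mathcal{G}'$, however, $A$ has no incoming edges, so the path is cut at $A$, which is a contradiction. Hence no directed path from $\mathcal{S}$ to $Y$ survives in $\mathcal{G}'$.

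The formal step is to sum out the non-intervened variables in topological order. This shows that the marginal of $Y$ involves only CPTs of non-descendants of $\mathcal{S}$ in $\mathcal{G}'$. Equivalently, one can apply Pearl's rule 3 of do-calculus: $\mathcal{S}$ and $Y$ are d-separated in $\mathcal{G}'$ after removing the edges into $\mathcal{S}$, since $\mathcal{S}$ has no outgoing directed path to $Y$ and all its incoming edges are gone. This gives
\[
\mathbb{P}[Y \mid do(s_0), do(\kappa)] = \mathbb{P}[Y \mid do(s_1), do(\kappa)].
\]

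\textbf{Main obstacle.} The delicate point is Step 1: the notion of a ``reasonable'' classifier is informal, and only under that idealization does the statement say the classifier ``is regarded as'' justifiably fair. I would therefore state explicitly the assumption that the classifier's interventional behaviour coincides with that of $Y$ under $\mathbb{P}$. A secondary subtlety is that a sensitive attribute may itself lie in $\mathcal{K}$ or be an ancestor of $\mathcal{K}$. This causes no problem, because all edges into $\mathcal{K}$ are removed in $\mathcal{G}'$, so the argument in Step 3 holds uniformly for all choices of $\mathcal{K}$.
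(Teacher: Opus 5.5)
The paper gives no proof of this corollary; it is imported from Salimi et al. Your Steps 2--3 are the standard argument behind that result. Intervening on $\mathcal{K}\supseteq\mathcal{A}$ deletes the edges into every admissible node, so no directed path from $\mathcal{S}$ to $Y$ survives in $\mathcal{G}'$, and the truncated factorization (or rule 3) removes the dependence on $s$. That part is sound. One cosmetic fix: the marginalization should sum out the non-ancestors of $Y$ in \emph{reverse} topological order.

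The genuine gap is Step 1, which you flag but then resolve by postulate. A reasonable classifier only approximates the observational conditional $\mathbb{P}[Y\mid\mathcal{X}]$, and is then evaluated at whatever feature values the intervention produces. That alone does not make $O$ behave like $Y$ under $\mathit{do}(\cdot)$, and in general it does not.

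Take $S\to A\to Y$ and $S\to C\leftarrow Y$, where $C$ is a non-sensitive feature equal to a noisy XOR of $S$ and $Y$ (recall $\mathcal{S}\subseteq\mathcal{X}$). Every directed path from $S$ to $Y$ passes through $A$. Your Step 3 correctly gives $\mathbb{P}[Y\mid \mathit{do}(s),\mathit{do}(a),\mathit{do}(c)]=\mathbb{P}[Y\mid a]$. But for $\mathcal{K}=\{A,C\}$ the classifier outputs $\mathbb{P}[Y\mid s,a,c]\propto\mathbb{P}[y\mid a]\,\mathbb{P}[c\mid s,y]$, which depends on $s$ through the collider $C$. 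So the identification you assume is exactly the nontrivial content of the ``reasonable classifier'' clause. It needs an extra hypothesis and a derivation.

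Here is the missing step. Assume no feature is a descendant of $Y$ (the label is a sink, as is implicit in the setting) and $\mathcal{X}=\mathcal{V}\setminus\{Y\}$.
\begin{enumerate}
\item Markov compatibility gives $\mathbb{P}[Y\mid\mathcal{X}]=\mathbb{P}[Y\mid\Pi_Y]$, so $O$ is produced from $\Pi_Y$ by $Y$'s own CPT.
\item Since $Y\notin\mathcal{K}$ and $Y$ appears in no other factor, the truncated factorizations of $(\mathcal{X},O)$ and $(\mathcal{X},Y)$ under $\mathit{do}(\mathcal{S}=s,\mathcal{K}=\kappa)$ coincide.
\item Hence $\mathbb{P}[O\mid \mathit{do}(s),\mathit{do}(\kappa)]=\mathbb{P}[Y\mid \mathit{do}(s),\mathit{do}(\kappa)]$.
\end{enumerate}
You also need positivity, so that the learned conditional is pinned down at the feature values the intervention produces.

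Equivalently, and closer to how the cited result is phrased (path condition on the classifier's output node), add $O$ to the graph with parent set $\Pi_Y$. Take any directed path from $\mathcal{S}$ to $O$ and replace its last edge $P\to O$ by $P\to Y$. This gives a directed path to $Y$ in $\mathcal{G}$, whose admissible node already lies on the path to $O$. Your Step 3 then applies to $O$ verbatim.
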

\begin{proposition}[\cite{salimi2019interventional,zheng2025causalpre}] \label{prop: fair}
    Let $\mathcal{D}$ be a database instance and $\mathcal{G}$ its attribute graph over the attribute set $\mathcal{V}$. Under attribute specification, $\mathcal{V}$ is fully partitioned into five disjoint subsets: $\mathcal{S}$ (sensitive), $\mathcal{I}$ (inadmissible), $\mathcal{A}$ (admissible), $\mathcal{W}$ (additional), and $\mathcal{Y}$ (label). Here $\mathcal{W}$ refers to additional attributes that are neither sensitive nor causally relevant to fairness. If, in $\mathcal{G}$, every edge directed toward the label attribute in $\mathcal{Y}$ originates exclusively from attributes in $\mathcal{A} \cup \mathcal{W}$, then any reasonable classifier trained on $\mathcal{D}$ respects justifiable fairness. Equivalently, the condition holds whenever the parent set $\Pi$ of the label attribute is fully contained in $\mathcal{A} \cup \mathcal{W}$, i.e., $\Pi \subseteq \mathcal{A} \cup \mathcal{W}$.
\end{proposition}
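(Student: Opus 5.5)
The plan is to reduce Proposition~\ref{prop: fair} to Corollary~\ref{cor:justifiable}. Concretely, I will show that the local condition $\Pi \subseteq \mathcal{A} \cup \mathcal{W}$ on the parents of the label implies the global path condition: every causal pathway from $\mathcal{S}$ to the label passes through an admissible attribute. Once that holds, the corollary shows that any reasonable classifier trained on $\mathcal{D}$ is justifiably fair.

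\textbf{Step 1: reduce to the last edge of a path.} Take any directed path $S = V_0 \to V_1 \to \cdots \to V_k = Y$ in $\mathcal{G}$ with $S \in \mathcal{S}$ and $Y \in \mathcal{Y}$. The path has at least one edge, and its last edge $V_{k-1} \to Y$ points into the label. So $V_{k-1} \in \Pi \subseteq \mathcal{A} \cup \mathcal{W}$.

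\textbf{Step 2: split on where $V_{k-1}$ lies.}
\begin{itemize}
\item If $V_{k-1} \in \mathcal{A}$, the path passes through an admissible attribute, as required.
\item If $k = 1$, then $V_{k-1} = S \in \mathcal{S}$. This contradicts $V_{k-1} \in \mathcal{A}\cup\mathcal{W}$, since the five classes are disjoint. So direct edges $S \to Y$ cannot occur.
\item The remaining case is $V_{k-1} \in \mathcal{W}$ with $k \ge 2$.
\end{itemize}

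\textbf{Step 3: handle the $\mathcal{W}$ case (the main obstacle).} Here I use the specification that $\mathcal{W}$ consists of attributes that are \emph{not causally relevant to fairness}. I formalize this as: no $W \in \mathcal{W}$ is a descendant of $\mathcal{S}$, except possibly through an admissible attribute. The reasoning is that an attribute on an unblocked directed path from $\mathcal{S}$ carries sensitive influence, so it would have been classified as inadmissible (or admissible), not as $\mathcal{W}$. Under this reading, consider the subpath $S \to \cdots \to V_{k-1}$. Either it already contains a node of $\mathcal{A}$, or $V_{k-1}$ is an $\mathcal{S}$-descendant that avoids $\mathcal{A}$. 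The second option is excluded by the specification, so every $\mathcal{S}$-to-$Y$ path is blocked by $\mathcal{A}$.

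This step is delicate because the statement only describes $\mathcal{W}$ informally. I will make the assumption explicit and note that it is the convention used by \cite{salimi2019interventional,zheng2025causalpre}.

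\textbf{Alternative route, avoiding the corollary.} One can also argue directly from $\mathcal{K}$-fairness. For any $\mathcal{K} \supseteq \mathcal{A}$, intervene on $\mathcal{S}$ and $\mathcal{K}$ and apply the truncated factorization~\eqref{eq: bayes}. Since $Y$'s parents lie in $\mathcal{A}\cup\mathcal{W}$, the mechanism $\mathbb{P}[Y\mid\Pi]$ depends on $\mathcal{S}$ only through parents that are fixed by the intervention or are non-descendants of $\mathcal{S}$. Hence $\mathbb{P}[Y\mid do(\mathcal{S}=s), do(\mathcal{K}=\kappa)]$ does not depend on $s$. A reasonable classifier approximates this distribution, so its outcome $O$ inherits the invariance. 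This gives justifiable fairness by Definition~\ref{def: justifiable}.
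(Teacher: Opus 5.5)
Your main route is correct and is the intended argument. The paper gives no proof of its own; it cites the result and places it right after Corollary~\ref{cor:justifiable}, from which it follows by exactly your last-edge argument, and the reading of $\mathcal{W}$ you make explicit is genuinely needed, since otherwise a path $S\to I\to W\to Y$ with $\Pi=\{W\}$ would violate the claim. Your alternative route is looser than it looks: a reasonable classifier approximates $\mathbb{P}[Y\mid\mathcal{X}]$, not $\mathbb{P}[Y\mid do(\mathcal{S}=s), do(\mathcal{K}=\kappa)]$, so you also need $\mathbb{P}[Y\mid\mathcal{X}]=\mathbb{P}[Y\mid\Pi]$ (the label has no children in $\mathcal{G}$) and must then average $\mathbb{P}[Y\mid\Pi]$ over the interventional distribution of $\Pi$, which is $s$-invariant because $\mathcal{A}\subseteq\mathcal{K}$ cuts every $\mathcal{S}$-to-$\mathcal{W}$ directed path.
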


\subsection{Latent Model and Identifiability}

Causal DAGs offer a principled way to model dependencies among observed attributes, but in many cases, there exist hidden or unmeasured factors that also affect the outcomes. To consider such factors, let $L$ denote an unobserved latent attribute in addition to the observed attributes $\mathcal{V}$. A latent causal model then specifies a joint distribution over $\left( \mathcal{V}, L \right)$, written as $\mathbb{P} \left[ \mathcal{V}, L; \theta \right]$, where $\theta = \left( \theta_1, \dots, \theta_d, \theta_l \right)$ denotes the model parameters. The observed distribution is obtained by marginalizing out $L$: 
\begin{align*}
    \mathbb{P}\left[ \mathcal{V}; \theta \right] = \sum_{L}{\mathbb{P}\left[ \mathcal{V}, L; \theta \right]},
\end{align*}
where the summation is taken over all possible states of $L$.

Since $L$ is unobserved, its role must be inferred from the observed data. This naturally raises the question of whether the model parameters can be reasonably determined from the observed data without ambiguity, a property known as \emph{identifiability}. When identifiability holds, the estimated parameter $\theta$ faithfully represents the true data-generating process, enabling reliable causal reasoning. The concept is formally defined as follows.
\begin{definition}[Identifiability~\cite{allman2009identifiability}]
    Consider a latent causal model parameterized by $\theta \in \Theta$, where $\Theta$ is the parameter space.
    Let $\mathbb{P}[\mathcal{V},\mathcal{L};\theta]$ denote the joint distribution induced by $\theta$, and let $\mathbb{P}[\mathcal{V};\theta]$ denote the corresponding observed distribution obtained by marginalizing out $\mathcal{L}$.
    The model is said to be \emph{identifiable} if \[ \forall \theta, \theta' \in \Theta, \quad \mathbb{P}[\mathcal{V}; \theta] = \mathbb{P}[\mathcal{V}; \theta'] \ \Rightarrow \ \theta = \theta'. \]
    Equivalently, the mapping $\theta \mapsto \mathbb{P}[\mathcal{V};\theta]$ is injective, so the parameters are uniquely determined by the observed distribution.
\end{definition}

While strict identifiability provides a clean theoretical guarantee, it is often overly restrictive for latent causal models. In general, it fails over the full parameter space~\cite{allman2009identifiability, allman2015parameter}, because degenerate parameter configurations may render distinct models observationally indistinguishable, and inherent symmetries such as latent-state label permutations do not affect the observed distribution. This motivates weaker but practically sufficient notions of identifiability, introduced below.
\begin{definition}[Generic Identifiability~\cite{allman2009identifiability}]\label{def: generic-indentifiability}
    A latent causal model is said to be \emph{generically identifiable} if there exists a subset $\Omega \subseteq \Theta$ of Lebesgue measure zero such that \[ \forall \theta, \theta' \in \Theta \setminus \Omega, \quad \mathbb{P}[\mathcal{V}; \theta] = \mathbb{P}[\mathcal{V}; \theta'] \ \Rightarrow \ \theta = \theta'. \]
    Equivalently, the mapping $\theta \mapsto \mathbb{P}[\mathcal{V};\theta]$ is injective almost everywhere, except on a measure-zero subset of $\Theta$.
\end{definition}
\begin{definition}[Generic Identifiability up to Label Swapping~\cite{allman2009identifiability}]
    Let $\Pi$ be the set of all permutations of the state labels of $\mathcal{L}$, and for $\pi \in \Pi$, let $\pi(\theta)$ denote the parameter vector obtained by applying $\pi$ to the state labels of $\mathcal{L}$ in all relevant CPTs. 
    A latent causal model is said to be \emph{generically identifiable up to label swapping} if there exists a subset $\Omega \subseteq \Theta$ of Lebesgue measure zero such that \[ \forall \theta, \theta' \in \Theta \setminus \Omega, \quad \mathbb{P}[\mathcal{V}; \theta] = \mathbb{P}[\mathcal{V}; \theta'] \ \Rightarrow \ \exists \pi \in \Pi \text{ such that } \theta' = \pi(\theta). \]
    That is, for almost all parameter values, the observed distribution determines the parameters uniquely, up to a permutation of the latent variable’s state labels.
\end{definition}
In this paper, we focus on the notion of \emph{Generic Identifiability up to Label Swapping}, which generally does not diminish the practical utility of a model and is sufficient for most applications~\cite{allman2009identifiability, allman2015parameter}. 
For convenience, unless otherwise specified, we use the terms \emph{identifiability} or \emph{generic identifiability} interchangeably to refer to \emph{Generic Identifiability up to Label Swapping}.

\subsection{Information Measures}

We next introduce several information measures~\cite{cover1999elements} that underpin our analysis. 
Let $X$, $Y$, and $Z$ denote random variables and $\mathcal{X}$, $\mathcal{Y}$, and $\mathcal{Z}$ denote random variable sets. The entropy of $X$ measures the uncertainty associated with its distribution:
\begin{align*}
    H(X) = - \sum_{x} \mathbb{P}[X=x] \log \mathbb{P}[X=x].
\end{align*}

The conditional entropy of $X$ given $Y$ quantifies the average uncertainty remaining in $X$ once $Y$ is known:
\begin{align*}
    H(X \mid Y) = - \sum_{x,y} \mathbb{P}[X=x, Y=y] \log \mathbb{P}[X=x \mid Y=y].
\end{align*}

The mutual information (MI) between $X$ and $Y$ captures the reduction in uncertainty of one variable upon observing the other:
\begin{align*}
    I(X;Y) = H(X) - H(X \mid Y) = H(Y) - H(Y \mid X).
\end{align*}
A higher value indicates stronger dependence between $X$ and $Y$.

The conditional mutual information (CMI) between $X$ and $Y$ given $Z$ extends MI by measuring the remaining dependence between $X$ and $Y$ after accounting for $Z$:
\begin{align*}
    I(X;Y \mid Z) = H(X \mid Z) - H(X \mid Y, Z) = H(Y \mid Z) - H(Y \mid X, Z).
\end{align*}
CMI is nonnegative and equals zero if and only if $X$ and $Y$ are conditionally independent given $Z$.

For random variable sets $\mathcal{X}=\{X_1, \dots, X_m\}$ and $\mathcal{Y}=\{Y_1, \dots, Y_n\}$, the conditional mutual information between $\mathcal{X}$ and $\mathcal{Y}$ given $\mathcal{Z}$ can be decomposed using the chain rule: 
\begin{align}
    I(\mathcal{X}; \mathcal{Y} \mid \mathcal{Z}) = \sum_{i=1}^{m} \sum_{j=1}^{n} I(X_i; Y_j \mid \mathcal{X}_{<i}, \mathcal{Y}_{<j}, \mathcal{Z}), \label{eq: cmi}
\end{align}
where $\mathcal{X}_{<i}=\{X_1, \dots, X_{i-1}\}$ and $ \mathcal{Y}_{<j}=\{Y_1, \dots, Y_{j-1}\}$.

\section{Problem Statement} \label{sec: problem}

Given a database $\mathcal{D}$ with $d$ attributes $\mathcal{V}=\{V_1, \dots, V_{d-1}, Y\}$, where $Y$ denotes the label attribute, the remaining attributes $\mathcal{V}\setminus\{Y\}$ are partitioned into four disjoint sets. The set $\mathcal{S}$ contains sensitive attributes such as gender or race, and $\mathcal{I}$ includes inadmissible attributes that directly encode sensitive information. Under justifiable fairness, neither $\mathcal{S}$ nor $\mathcal{I}$ is allowed to have a direct causal effect on the decision. The set $\mathcal{A}$ contains admissible attributes whose influence on the decision is considered legitimate even if affected by $\mathcal{S}$. The set $\mathcal{W}$ includes additional attributes unrelated to sensitive information and can therefore be used as decision factors.

We consider a practical setting where the attribute space of the database $\mathcal{D}$ is imperfect due to attribute ambiguity and attribute absence, as discussed in Section~\ref{subsec:intro-motivation}. Our objective is to design a data pre-processing framework that calibrates the empirical distribution of $\mathcal{D}$ such that any reasonable downstream predictive model trained on it satisfies justifiable fairness while preserving effectiveness.

\section{LatentPre} \label{sec: methodology}

In this section, we introduce our latent-augmented framework, \latent. In a nutshell, \latent\ follows the \textit{policy-then-adjust} paradigm and operates in two stages. It first specifies a fairness policy that encodes domain knowledge and fairness requirements, and then adjusts the dataset to satisfy this policy with minimal distributional distortion while preserving data utility.

\begin{figure*}[t]
\centering
\resizebox{0.94\textwidth}{!}{%
\begin{minipage}{\textwidth}
\centering
\setlength{\tabcolsep}{0pt}

\begin{tabularx}{\textwidth}{@{}*{5}{>{\centering\arraybackslash}X}@{\hspace{10mm}}>{\centering\arraybackslash}X@{}}

\multicolumn{5}{@{}l@{}}{%
    \includegraphics[width=0.833\linewidth]{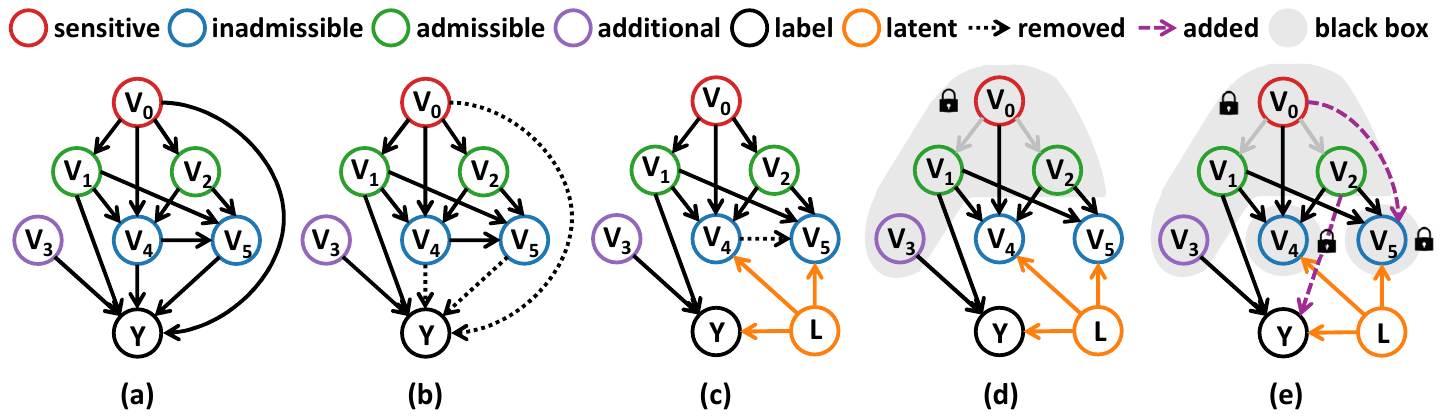}
} & \\[0.8ex]

\begin{minipage}[t]{\linewidth}
    \centering
    \includegraphics[width=\linewidth]{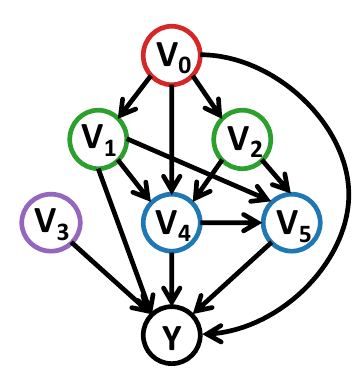}\\[-0.3ex]
    \vspace{1mm} {\small (a)}
\end{minipage}
&
\begin{minipage}[t]{\linewidth}
    \centering
    \includegraphics[width=\linewidth]{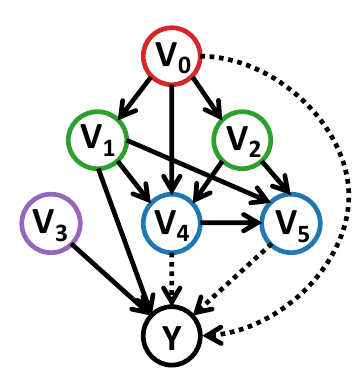}\\[-0.3ex]
    \vspace{1mm} {\small (b)}
\end{minipage}
&
\begin{minipage}[t]{\linewidth}
    \centering
    \includegraphics[width=\linewidth]{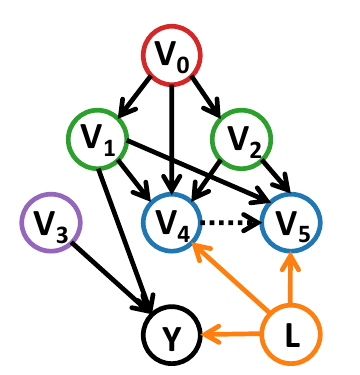}\\[-0.3ex]
    \vspace{1mm} {\small (c)}
\end{minipage}
&
\begin{minipage}[t]{\linewidth}
    \centering
    \includegraphics[width=\linewidth]{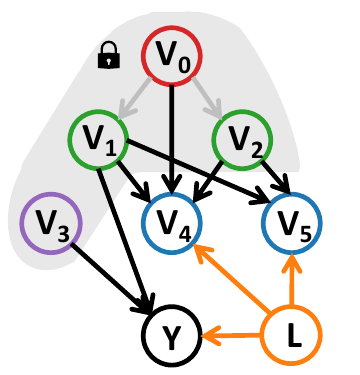}\\[-0.3ex]
    \vspace{1mm} {\small (d)}
\end{minipage}
&
\begin{minipage}[t]{\linewidth}
    \centering
    \includegraphics[width=\linewidth]{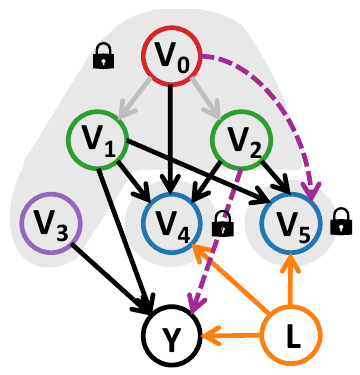}\\[-0.3ex]
    \vspace{1mm} {\small (e)}
\end{minipage}
&
\begin{minipage}[t]{\linewidth}
    \centering
    \includegraphics[width=\linewidth]{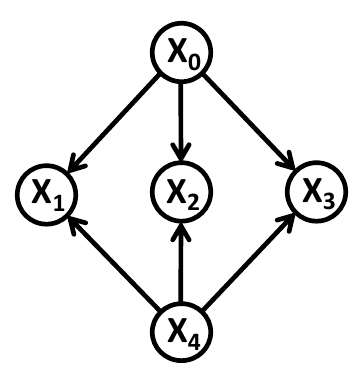}
\end{minipage}
\\

\multicolumn{5}{@{}c@{}}{%
    \begin{minipage}[t]{0.833\textwidth}
        \centering
        \vspace{-2mm}
        \captionof{figure}{Framework construction and refinement process.}
        \label{fig: reduce}
    \end{minipage}
}
&
\begin{minipage}[t]{\linewidth}
    \centering
    \vspace{-6mm}
    \captionof{figure}{The generalized model.}
    \label{fig: dag_dual}
\end{minipage}

\end{tabularx}
\end{minipage}%
}
\vspace{3mm}
\end{figure*}

\subsection{Latent-Augmented Fairness Policy} \label{subsec: method-framework}

\latent\ instantiates the policy using a causal DAG. Starting from the raw-data DAG, \latent\ refines the structure through a sequence of steps that adjust causal connections and introduce a controlled interaction between a latent attribute and observed attributes. For clarity, we temporarily assume that the raw-data DAG is available. In Section~\ref{subsec: method-refine}, we describe how to construct the policy when such prior knowledge is unavailable. 

Given the raw-data DAG in Figure~\ref{fig: reduce}a, we first refine it to retain only justifiably fair causal pathways, following Proposition~\ref{prop: fair}. Specifically, we remove all directed edges from sensitive and inadmissible attributes to the label, as shown in Figure~\ref{fig: reduce}b. This yields a fair policy in which sensitive attributes influence the label only through admissible attributes. We preserve this structure in subsequent steps, so the fairness guarantee holds throughout.

Next, we augment the policy by introducing a latent attribute to capture utility-relevant signals not fully represented in the observed attributes.
The key idea is to model hidden factors that affect both inadmissible attributes $\mathcal{I}$ and the label $Y$ while remaining independent of sensitive attributes. 
This design treats inadmissible attributes as jointly determined by observed sensitive factors and an unobserved latent factor, allowing $L$ to extract decision-relevant signals. Since $L$ is independent of the sensitive attributes by design, the fairness guarantees established above are preserved. 

\vspace{2mm}
\noindent\textbf{First-cut solution.} 
A straightforward approach is to model $L$ as a source node that affects all inadmissible attributes $\mathcal{I}$ and the label $Y$. However, connecting $L$ to the entire set $\mathcal{I}$ substantially increases model complexity and makes data calibration computationally expensive when $|\mathcal{I}|$ is large. We therefore adopt a more fine-grained design by allowing $L$ to affect only the label and a selected subset of inadmissible attributes, namely the direct parents of $Y$, denoted by $\mathcal{I}_c := \mathcal{I} \cap \Pi_Y$. The intuition is that, under the local Markov property and Markov compatibility, the effect of $\mathcal{I}$ on $Y$ is mainly captured through $\mathcal{I}_c$, as reflected in Equation~\eqref{eq: bayes} and supported by the analysis in Equation~\eqref{eq: Pgl'_approx} in the next subsection. Modeling $L$ as connecting only to $\mathcal{I}_c$ thus preserves sufficient decision-relevant information while keeping data calibration scalable. 

\vspace{2mm}
\noindent\textbf{Refinement for identifiability.} 
A subtle yet critical limitation of the first-cut solution is that it overlooks \emph{identifiability}, that is, the ability to uniquely infer the effect of the latent attribute from observed data. Without identifiability, $L$ may capture redundant or unstable signals. This issue becomes particularly severe when the attributes in $\mathcal{I}_c$ are tightly connected, as their information may then collapse into a single shared source of variation. In such cases, $L$ behaves as if it interacts with only one inadmissible attribute and the label, weakening the statistical basis for estimating $L$ and leading to ambiguous or meaningless contributions. 

To tackle this, we introduce a pruning mechanism that reshapes the local causal structure around $\mathcal{I}_c$ to improve identifiability. Since adopting a fairness policy already removes all direct causal effects from inadmissible attributes to the label, it prevents the information in $\mathcal{I}_c$ and $Y$ from collapsing into a single source. Building on this property, it suffices to further limit dependence within $\mathcal{I}_c$. Accordingly, \latent\ partitions $\mathcal{I}_c$ into two conditionally independent subsets $\mathcal{I}_{c_1}$ and $\mathcal{I}_{c_2}$, and eliminates direct interconnections between them. The latent attribute $L$ then points to both subsets and to the label $Y$, as illustrated in Figure~\ref{fig: reduce}c. 
Formal identifiability analysis and algorithmic details are provided in Section~\ref{sec: component}.

Beyond identifiability and implementability, this design also provides an alternative view of robustness. By distributing fairness-aware adjustments across the DAG, it avoids making any single region overly sensitive to imperfections in attribute specification. As shown in Figure~\ref{fig: reduce}b, fairness is enforced by severing all edges from $\mathcal{S} \cup \mathcal{I}$ to $Y$, localizing modifications near the label. In contrast, Figure~\ref{fig: reduce}c introduces the latent variable $L$ to mediate the influence of $\mathcal{I}_c$ on $Y$, preserving utility-relevant signals while adjusting local dependencies among $\{V_4, V_5\}$. This redistribution spreads the impact of fairness constraints more evenly, preventing structural collapse and maintaining overall stability and utility.

\subsection{DAG Refinement} \label{subsec: method-refine}

A key challenge in applying this policy is that the underlying DAG is rarely available in practice~\cite{salimi2019interventional, pirhadi2024otclean}. While causal discovery methods such as Max-Min Hill Climbing (MMHC)~\cite{tsamardinos2006max} can estimate the DAG structure, their computational cost grows exponentially with the number of attributes~\cite{tsamardinos2006max}, posing serious scalability concerns. This motivates a more efficient procedure for extracting only the structural information needed for policy enforcement. 

We address this challenge using a two-step refinement strategy inspired by CausalPre~\cite{zheng2025causalpre}, but tailored to support latent attribute integration. The key idea is to construct a coarse-grained fairness policy that avoids full DAG reconstruction while preserving causally fair information. 

\emph{In the first step,} we identify relationships that remain unchanged throughout processing and abstract them as a black box. Since these structures require no modification, the policy only needs to indicate that they are preserved, without specifying their internal form. 
As discussed in Section~\ref{subsec: method-framework} and formalized in Proposition~\ref{prop: fair}, fairness-aware pre-processing only modifies the structure around the label $Y$. To recover hidden utility-relevant signals, our framework further updates the region around $\mathcal{I}_c$. These localized adjustments confine the scope of modification to the neighborhoods of $Y$ and $\mathcal{I}_c$, leaving the rest intact. We therefore treat these preserved relationships among $\mathcal{V} \setminus (\mathcal{I}_c \cup {Y})$ collectively as a black box, yielding the simplification from Figure~\ref{fig: reduce}c to Figure~\ref{fig: reduce}d. In this representation, only the edges directed to $Y$ and $\mathcal{I}_c$ (shown as solid black arrows) are explicitly encoded in the fairness policy, while all others are abstracted away. Although the figure shows only a few attributes, in practice, most structural information is absorbed into the black-box region, substantially reducing computational cost.

\emph{In the second step,} we refine the structure around $Y$ and $\mathcal{I}_c$ by greedily expanding their parent sets, avoiding expensive exact parent computation. As shown in~\cite{zheng2025causalpre}, this does not compromise informational utility. For $Y$, we admit all attributes in $\mathcal{A}\cup\mathcal{W}$ as potential parents. For each $V_i \in \mathcal{I}_c$, its parents are drawn from $\mathcal{S}\cup\mathcal{I}\cup\mathcal{A}$, ensuring that sensitive information comes from observable sources and that the latent attribute reflects only unbiased signals. 

However, since each $V_i$ ($\in \mathcal{I}_c$) also belongs to $\mathcal{I}$, naive parent selection may introduce self-loops or cycles. To avoid this, we further abstract the relationships within $\mathcal{I}_c$. Let $\mathcal{I}_o := \mathcal{I} \setminus \mathcal{I}_c$ denote the remaining inadmissible attributes. Because dependencies within each partition of $\mathcal{I}_c$ remain locally unchanged during construction, we treat them as two additional black boxes. Each black box then admits $\mathcal{S} \cup \mathcal{I}_o \cup \mathcal{A}$ as its effective parent set, preventing cyclic dependencies while preserving necessary information flow. This yields the coarse-grained representation in Figure~\ref{fig: reduce}e. 

Although this refinement may introduce additional dependencies to $Y$ and $\mathcal{I}_c$, these effects are acceptable. For $Y$, the added decision factors are admissible by definition. For $\mathcal{I}_c$, its influence on $Y$ is already blocked, thus new dependencies do not affect decision-making or violate fairness. While this may slightly distort the local distribution around $\mathcal{I}_c$, it enables a more faithful reconstruction of the distribution around $Y$ through the latent variable $L$. Experimental results in Section~\ref{sec: exp} confirm that this trade-off yields clear performance gains. Moreover, such additional dependencies are rare in practice, as real-world datasets typically exhibit rich attribute interactions that already account for most dependencies. 

The approximated distribution under the refined coarse-grained DAG takes the general form:
\begin{align} \label{eq: Pgl'_approx}
    \mathbb{P} [L, \mathcal{V}]
   &\approx \mathbb{P}[\mathcal{V}\setminus(\mathcal{I}_c\cup\{Y\})] \cdot \mathbb{P}[L] \cdot \mathbb{P}[\mathcal{I}_{c_1}\mid \Pi_{c_1}'] \notag \\
   &\quad\quad\quad \cdot \mathbb{P}[\mathcal{I}_{c_2}\mid \Pi_{c_2}'] \cdot \mathbb{P}[Y\mid \Pi_Y'],
\end{align}
where $\Pi_{c_1}'=\{L,\mathcal{S},\mathcal{I}_o,\mathcal{A}\}$, $\Pi_{c_2}'=\{L,\mathcal{S},\mathcal{I}_o,\mathcal{A}\}$, and $\Pi_Y'=\{L,\mathcal{A},\mathcal{W}\}$ represent the greedy fair parent sets of $\mathcal{I}_{c_1}$, $\mathcal{I}_{c_2}$, and $Y$, respectively.

\subsection{The Complete Framework} \label{subsec: method-complete}

With the refined fairness policy in place, \latent\ adjusts the raw data to satisfy its structural constraints while preserving the original distribution as much as possible. This amounts to estimating the optimal parameters $\theta^*$ for the refined DAG that best approximate the raw data distribution. Details of the implementation are provided in Section~\ref{subsec: method-estimation}.

Algorithm~\ref{algo: complete} summarizes the full processing pipeline. Given a raw dataset $\mathcal{D}$, \latent\ first identifies the set $\mathcal{I}_c$ and partitions it to construct the latent-augmented fairness policy based on Equation~\eqref{eq: Pgl'_approx}. It then estimates the optimal parameter set $\theta^*$ under this policy and treats the induced distribution as the target for data processing. Given $\theta^*$, \latent\ samples each attribute set ``$i$'' from its factor distribution $\mathbb{P}_{\theta_i}$ and overwrites the corresponding columns in the dataset. After marginalizing out the latent attribute $L$, it outputs the processed dataset $\mathcal{D}'$.

\begin{algorithm}[!t]
    \caption{\textsc{DataPreprocessing}} 
    \label{algo: complete}
        \KwIn{Database $\mathcal{D}$ with attribute set $\mathcal{V}=\mathcal{S}\cup\mathcal{I}\cup\mathcal{A}\cup\mathcal{W}\cup\{Y\}$, the number of latent states $\tau$, maximum number of iterations $n$, convergence threshold $\eta$, maximum size of conditioning set $\alpha$}
        \KwOut{Processed data $\mathcal{D}'$} 
        
        $\mathcal{I}_{c} \gets$ \textsc{Identification}($\mathcal{D}$, $\alpha$)\tcp*{we set $\alpha = 2$}
        $\mathcal{I}_{o} \gets \mathcal{I} \setminus \mathcal{I}_c$\;
        $\mathcal{I}_{c_1}, \mathcal{I}_{c_2} \gets$ \textsc{Partition}($\mathcal{D}, \mathcal{I}_{c}, \mathcal{S}\cup\mathcal{I}_o\cup\mathcal{A}, \tau$)\;

        $\theta^* \gets$ \textsc{ParameterEstimation}($\mathcal{D}, \mathcal{I}_{c_1}, \mathcal{I}_{c_2}, n, \eta$ )\;

        $k \gets \vert \mathcal{D} \vert$\;
        
        \For{$\theta_i \in \theta^*$}{ 
            Sample $k$ values for attributes in set ``$i$'' from distribution $\mathbb{P}_{\theta_i}$, and fill the corresponding columns in $\mathcal{D}'$\; \label{line: sample k}
        } 

        Marginalize out latent attribute $L$ from $\mathcal{D}'$\;

       \Return{$\mathcal{D}'$}\;
\end{algorithm}


\section{Algorithmic Design and Analysis} \label{sec: component}

This section presents the theoretical analysis and algorithmic design of \latent. We first establish the identifiability of the latent-augmented DAG in Section~\ref{subsec: method-identifiability}. We then describe the algorithms for identifying and partitioning the attribute set $\mathcal{I}_c$ in Sections~\ref{subsec: method-identification} and~\ref{subsec: method-partition}. Finally, in Section~\ref{subsec: method-estimation}, we describe the parameter estimation procedure for enforcing the policy and obtaining the optimal parameters $\theta^*$ for data adjustment.

\subsection{Identifiability Analysis} \label{subsec: method-identifiability}

To enable formal identifiability analysis, we simplify the DAG by grouping related attributes into composite units~\cite{zeng2025causal, loukas2019graph, hashemi2024comprehensive} and treating each composite as a single clumped attribute~\cite{allman2009identifiability}. The resulting generalized model is shown in Figure~\ref{fig: dag_dual}, where $X_0$ denotes the latent attribute $L$, $X_1$ and $X_2$ represent the two disjoint subsets $\mathcal{I}_{c_1}$ and $\mathcal{I}_{c_2}$, $X_3$ denotes the label $Y$, and $X_4$ aggregates $\mathcal{S}\cup\mathcal{I}_o\cup\mathcal{A}$. The following theorem shows that, under mild conditions, this generalized model is identifiable, from which the identifiability of \latent\ follows directly. 
\begin{theorem} \label{thm: dual}
    Consider a latent causal model with one latent attribute $X_0$ and $d \geq 4$ observed attributes. The observed attributes are partitioned into four nonempty groups $X_1, X_2, X_3, X_4$, such that conditional on $(X_0,X_4)$ the groups $X_1, X_2, X_3$ are mutually independent, and moreover $X_0 \perp X_4$. The latent attribute $X_0$ takes $\tau \ge 2$ states with strictly positive mixture weights. For each group $X_i$ with $i\in \{1,2,3,4\}$, define its group cardinality as \[ \kappa_i := \prod_{j\in X_i} k_j, \] where $k_j$ is the number of states of the attribute $j$.
    
    Assume $\kappa_i \geq 2$ for all $i$. If $\tau$ does not exceed the cardinality of at least two of the groups among $X_1, X_2, X_3$, that is, if 
    \begin{align} \label{eq: latent_limit}
        \tau \;\leq\; \min \bigl\{ \max(\kappa_1,\kappa_2),\ \max(\kappa_2,\kappa_3),\ \max(\kappa_1,\kappa_3) \bigr\}, 
    \end{align}
    then the model parameters are generically identifiable up to a permutation of the latent states. 
\end{theorem}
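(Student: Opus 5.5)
The plan is to reduce the statement, slice by slice in $X_4$, to the classical three-way latent class model, and then invoke Kruskal's uniqueness theorem in the form used by Allman, Matias and Rhodes~\cite{allman2009identifiability}.

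\textbf{Step 1 (factorization and slicing).} Write the parameters as $\theta=(\pi,\ \mathbb{P}[X_4],\ \{M_i^{(x_4)}\}_{i=1,2,3,\,x_4})$. Here $\pi\in\Delta^{\tau-1}$ is the law of $X_0$, and $M_i^{(x_4)}$ is the $\tau\times\kappa_i$ stochastic matrix with rows $\mathbb{P}[X_i\mid X_0=h, X_4=x_4]$. Treating each group as a single clumped variable with $\kappa_i$ states, the assumptions give
\[
\mathbb{P}[x_1,x_2,x_3,x_4]=\mathbb{P}[x_4]\sum_{h=1}^{\tau}\pi_h\,M_1^{(x_4)}(h,x_1)\,M_2^{(x_4)}(h,x_2)\,M_3^{(x_4)}(h,x_3).
\]
This uses $X_0\perp X_4$, so the mixing weights are the same $\pi$ in every slice. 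The marginal $\mathbb{P}[X_4]$ is read off directly from the observed distribution. Dividing by it, each slice $x_4$ yields a $\kappa_1\times\kappa_2\times\kappa_3$ tensor. That tensor is exactly a $\tau$-class latent class model with three observed variables.

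\textbf{Step 2 (Kruskal condition).} Within a slice, by Kruskal's theorem as applied in~\cite{allman2009identifiability}, the parameters $(\pi,M_1^{(x_4)},M_2^{(x_4)},M_3^{(x_4)})$ are generically identifiable up to label swapping whenever
\[
\min(\tau,\kappa_1)+\min(\tau,\kappa_2)+\min(\tau,\kappa_3)\ \ge\ 2\tau+2.
\]
Condition~\eqref{eq: latent_limit} states precisely that $\tau\le\kappa_i$ for at least two indices $i$. Without loss of generality take $\kappa_1,\kappa_2\ge\tau$. Then the left-hand side is at least $2\tau+\min(\tau,\kappa_3)\ge 2\tau+2$, because $\tau\ge2$ and $\kappa_3\ge2$. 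Positivity of $\pi$ is assumed.

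The genericity needed is that each $M_i^{(x_4)}$ attains its maximal Kruskal rank $\min(\tau,\kappa_i)$. This fails only on the zero set of a nontrivial polynomial in the entries of a minor. Such a set is a proper algebraic subset of the product of simplices and therefore has Lebesgue measure zero.

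\textbf{Step 3 (gluing slices, the main obstacle).} Each slice is recovered only up to its own permutation $\sigma_{x_4}$ of the latent labels. The full $\theta$ requires a single common permutation across all slices. I would resolve this using the shared weights $\pi$. Add to the exceptional set $\Omega$ the measure-zero set where two coordinates of $\pi$ coincide, so that generically $\pi$ has distinct entries. Fix a labeling in one slice. In every other slice the recovered weight vector is $\sigma_{x_4}(\pi)$, and with distinct entries there is a unique permutation mapping it back to the reference ordering. Applying that permutation relabels the rows of $M_i^{(x_4)}$ consistently, so all slices share one labeling. Hence $\theta'=\sigma(\theta)$ for a single $\sigma\in\Pi$.

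\textbf{Step 4 (assembling $\Omega$).} Take $\Omega$ to be the finite union, over the $\kappa_4$ values of $x_4$, of the Kruskal-rank-deficient sets, together with the diagonal set $\{\pi_h=\pi_{h'}\}$. A finite union of measure-zero sets has measure zero, which gives generic identifiability up to label swapping.

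I expect the delicate points to be two: justifying that rank deficiency defines a proper algebraic set within the simplex-constrained parameter space (exhibit one full-rank point), and the cross-slice alignment in Step 3.
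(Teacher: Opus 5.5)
Your proposal is correct and follows the same route as the paper: condition on each value of $X_4$, apply the Kruskal-based three-way result of Allman et al.\ slice by slice (using $\kappa_1,\kappa_2\ge\tau$), glue the slice-wise permutations through the shared mixing weights of $X_0$, and read $\mathbb{P}[X_4]$ off the observed marginal. Your version is also more careful than the paper's: its gluing step silently requires pairwise distinct entries of $\pi$ (with ties, tied latent states can be relabeled independently in different slices without changing $\theta_0$), and you correctly add that measure-zero set to $\Omega$ and explicitly check $\min(\tau,\kappa_1)+\min(\tau,\kappa_2)+\min(\tau,\kappa_3)\ge 2\tau+2$.
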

\begin{proof}
    Let $\theta_i$ denote the local parameter at $X_i$ for $i\in\{0,1,2,3,4\}$. We aim to show that the local parameters $\theta_i$ of each $X_i$ are generically identifiable up to a global permutation of the latent states.
    
    Fix any value $j \in \{1,\ldots,\kappa_4\}$ of $X_4$. Conditioning on $X_4=j$, we obtain a latent class model with three observed attribute groups $(X_1, X_2, X_3)$ and one latent variable $X_0$. For $i=0,1,2,3$, denote these conditioned attributes or groups by $X_i^{(j)}$ and let $\theta_i^{(j)}$ denote their associated local parameters.

    By assumption on the group cardinalities, at least two of $\kappa_1,\kappa_2,\kappa_3$ are not smaller than $\tau$. Without loss of generality, assume $\kappa_1,\kappa_2 \geq \tau$. Then, for each conditional model $(X_0^{(j)}, X_1^{(j)}, X_2^{(j)}, X_3^{(j)})$, the identifiability condition in Equation~\eqref{eq: latent_limit}, derived from tensor decomposition~\cite{allman2009identifiability, allman2015parameter}, is satisfied. It therefore follows that the parameters $(\theta_0^{(j)}, \theta_1^{(j)}, \theta_2^{(j)}, \theta_3^{(j)})$ are generically identifiable, up to a permutation $\pi_j$ of the latent states.
    
    Since the mixing distribution of $X_0$ does not depend on $X_4$, we have $\theta_0^{(j)} = \theta_0$ for all $j$. This implies that the permutations $\pi_j$ obtained for different $j$ must be consistent; otherwise, they would yield distinct versions of $\theta_0$, contradicting uniqueness. Therefore, all $\pi_j$ coincide with a single global permutation $\pi$. Consequently, $(\theta_0, \theta_1, \theta_2, \theta_3)$ are generically identifiable up to a global permutation of the latent states. Finally, the parameters $\theta_4$ are directly estimated from the conditional distributions $\mathbb{P} \left[ X_4 \right]$, which are part of the observed distribution. This completes the proof.
\end{proof}
In addition to identifiability, Theorem~\ref{thm: dual} provides a practical upper bound on the number of latent states. Since the label domain is often small, we conservatively set
\begin{align} \label{eq: latent_limit_simple}
    \tau \leq \min {(|\mathcal{I}_{c_1}|, |\mathcal{I}_{c_2}|)}.
\end{align}

\subsection{Identification of $\mathcal{I}_c$} \label{subsec: method-identification}

To construct the latent-augmented fairness policy, we must identify the attribute set $\mathcal{I}_c$, as mentioned in Section~\ref{subsec: method-framework}. A natural approach is to first recover the parent set $\Pi_Y$ of the label $Y$ and then compute $\mathcal{I}_c = \mathcal{I} \cap \Pi_Y$. However, this requires $O(d \cdot 2^d)$ conditional independence (CI) tests, since for each attribute $V_i \in \mathcal{V} \setminus {Y}$, we must evaluate whether $V_i \perp Y \mid \Omega$ for all subsets $\Omega \subseteq \mathcal{V} \setminus \{V_i, Y\}$. This exponential cost is prohibitive in practice.

Given that the inadmissible set is typically much smaller than the full attribute set, we instead test each inadmissible attribute individually to determine whether it is a parent of $Y$. This reduces the number of CI tests to $O(|\mathcal{I}| \cdot 2^d)$. Although still exponential, empirical studies~\cite{kocaoglu2023characterization, sondhi2019reduced} show that limiting the size of conditioning sets to 2 or 3 is usually sufficient in practice, which reduces the overall cost to $O(|\mathcal{I}| \cdot d)$. In this paper, we set the maximum size of the conditioning set, denoted by $\alpha$, to $2$ by default.

Algorithm~\ref{algo: identification} outlines the identification procedure, which consists of two steps: (i) test whether a pathway exists from $X$ to $Y$, and (ii) if so, determine whether it is direct. In the initial step (Lines~\ref{line: chi2 start}-\ref{line: chi2 end}), we apply the chi-square test to evaluate whether an inadmissible attribute $X$ is independent of $Y$. If independence is confirmed, we eliminate $X$ from further consideration. Otherwise, we proceed to the second step (Lines~\ref{line: g start}-\ref{line: g end}), using the G-test to evaluate whether the dependence persists after conditioning on various subsets $Z$. If the dependency holds across all such $Z$, we conclude that no set blocks the path, indicating a direct causal link. This identification process is computationally efficient due to the linear number of tests and the early-stop mechanism.

\begin{algorithm}[!t]
    \caption{\textsc{Identification}} 
    \label{algo: identification}
        \KwIn{Database $\mathcal{D}$ with attribute set $\mathcal{V} = \mathcal{S} \cup \mathcal{I} \cup \mathcal{A} \cup \mathcal{W} \cup \{Y\}$, maximum size of conditioning set $\alpha$}
        \KwOut{Inadmissible subset $\mathcal{I}_{c}$}
        
        Let $\mathcal{I}_{c} \gets \mathcal{I}$ \;
        
        \For{$m \gets 0 $ to $ \alpha$}{
            \For{each attribute $X \in \mathcal{I}_{c}$}{
                \eIf{$m = 0$}{ 
                    Test if $X \perp Y$ using chi-square test\; \label{line: chi2 start}
                    If independent: Remove $X$ from $\mathcal{I}_{c}$\; \label{line: chi2 end}
                }{
                    $\mathcal{Z} \gets $ Construct conditioning sets of size $m$ from $\mathcal{V} \setminus\{X, Y\}$\; \label{line: g start}
                    Test if $X \perp Y \mid Z$ using G-test, where $Z \in \mathcal{Z}$\;
                    If independent for any $Z$: Remove $X$ from $\mathcal{I}_{c}$ and break\; \label{line: g end}
                }
            }
        }
        
        \Return{$\mathcal{I}_{c}$}\;
\end{algorithm}

\subsection{Partition of $\mathcal{I}_c$} \label{subsec: method-partition}

After identifying the set $\mathcal{I}_c$, we partition it into two disjoint and non-empty subsets, $\mathcal{I}_{c_1}$ and $\mathcal{I}_{c_2}$, to enable reasonable integration of the latent attribute. The subsets $\mathcal{I}_{c_1}$ and $\mathcal{I}_{c_2}$ are required to be conditionally independent given $\mathcal{Z}=\mathcal{S}\cup\mathcal{I}_o\cup\mathcal{A}$ to satisfy the identifiability condition described in Section~\ref{subsec: method-identifiability}. To preserve as much information as possible, we formulate the partitioning process as a constrained optimization problem that aims to find two sufficiently large subsets with minimal dependence under the given context:
\begin{align*}
    \min_{\mathcal{I}_{c_1}, \mathcal{I}_{c_2}} \; I(\mathcal{I}_{c_1}; \mathcal{I}_{c_2} \mid \mathcal{Z}) \quad \text{s.t.} \quad \vert \mathcal{I}_{c_1} \vert \geq \tau, \vert \mathcal{I}_{c_2} \vert \geq \tau,
\end{align*}
where $\tau$ represents the number of latent states and the constraint ensures that each subset is large enough for the latent attribute to represent meaningful states, as bounded by Equation~\eqref{eq: latent_limit_simple}.

This optimization is computationally intractable due to two challenges: computing multivariate conditional mutual information (MCMI) is costly, and the number of possible bipartitions grows exponentially with the size of $\mathcal{I}_c$. To address this, we apply two approximations. First, we estimate the MCMI using the sum of pairwise conditional dependencies (i.e., CMI) ~\cite{chow1968approximating} as shown below:
\begin{align} \label{eq: approx-cmi}
    I(\mathcal{I}_{c_1}; \mathcal{I}_{c_2} \mid \mathcal{Z})
    \approx \sum_{X \in \mathcal{I}_{c_1}} \;\sum_{Y \in \mathcal{I}_{c_2}} I(X;Y \mid \mathcal{Z}).
\end{align}
This captures the essential cross-set dependencies while remaining computationally efficient. 
Second, we employ a greedy hill-climbing strategy to search for a near-optimal partition, as described in Algorithm~\ref{algo: partition}. The algorithm begins with a random bipartition of $\mathcal{I}_c$ that satisfies the cardinality constraint (Lines~\ref{line: partition-init-start}--\ref{line: partition-init-end}), then iteratively refines it through two steps. In the first step (1-move), it considers moving a single attribute from one group to the other and selects the move that most reduces the estimated CMI (Lines~\ref{line: partition-move-start}--\ref{line: partition-move-end}). If no such move is valid, it performs the second step (2-swap) by exchanging one attribute from each group (Lines~\ref{line: partition-swap-start}–\ref{line: partition-swap-end}). This exchange step is particularly helpful when the partition is close to the cardinality threshold and single moves would violate the constraint. The algorithm repeats these updates until no significant CMI reduction is observed, as determined by the relative tolerance parameter $\varepsilon$. The parameter $\varepsilon$ controls the trade-off between runtime and partition quality, and we set $\varepsilon=10^{-5}$ in experiments to ensure high-quality partitions with reasonable runtime. The final output is a partition $(\mathcal{I}_{c_1}, \mathcal{I}_{c_2})$, which serves as the basis for policy construction. 

\vspace{2mm}
\noindent\textbf{Complexity analysis.} Let $m=|\mathcal{I}_c|$. As described, the while loop updates the partition $(\mathcal{I}_{c_1}, \mathcal{I}_{c_2})$ only when the CMI improvement satisfies $\Delta^* \ge \varepsilon n$ and stops once $n \le \varepsilon n_0$. Since each accepted update reduces the estimated CMI value by at least a multiplicative factor of $(1-\varepsilon)$, the loop performs at most $O(\log(\varepsilon)/\log(1-\varepsilon))=O(\frac{1}{\varepsilon} \log\frac{1}{\varepsilon})$ iterations. In each iteration, the dominant cost comes from the 2-swap step, which considers at most $|\mathcal{I}_{c_1}||\mathcal{I}_{c_2}| \le m^2/4$ candidate swaps. For each candidate partition, evaluating Equation~\eqref{eq: approx-cmi} involves $O(m^2)$ pairwise CMI evaluations. Therefore, each iteration requires at most $O(m^4)$ pairwise CMI evaluations, and the overall procedure requires at most $O(\frac{1}{\varepsilon}\log\frac{1}{\varepsilon} \cdot m^4)$ such evaluations.

\subsection{Parameter Estimation} \label{subsec: method-estimation}

Once the fairness policy is established, we adjust the raw data to satisfy its structural constraints while minimizing distortion to the underlying distribution, thereby completing the fairness-aware pre-processing.
Equivalently, we seek the parameter setting $\theta^*$ of the refined DAG structure that is most faithful to the original data distribution among all feasible configurations. This can be cast as a maximum likelihood estimation (MLE) problem:
\begin{align}
    \theta^* = \text{arg}\max_\theta\ \log{\sum_{L}\mathbb{P}\left[\mathcal{V}, L\mid\theta\right]}. \label{eq: mle}
\end{align}
Because the latent variable $L$ appears inside the logarithm, direct maximization is intractable. 
To address this, we introduce an auxiliary distribution $\mathbb{P}\left[L\mid \mathcal{V},\hat{\theta}\right]$ with fixed parameter $\hat{\theta}$ and apply Jensen's inequality to derive a tractable lower bound:
\begin{align}
    \log \sum_{L}\mathbb{P}&\left[\mathcal{V}, L\mid\theta\right]
    = \log{\mathbb{E}_{L\sim \mathbb{P}\left[L\mid \mathcal{V},\hat{\theta}\right]}\left[\frac{\mathbb{P}\left[\mathcal{V}, L\mid\theta\right]}{\mathbb{P}\left[L\mid \mathcal{V},\hat{\theta}\right]}\right]} \notag \\
    &\geq \mathbb{E}_{L\sim \mathbb{P}\left[L\mid \mathcal{V},\hat{\theta}\right]}\bigg[\log\mathbb{P}\left[\mathcal{V}, L\mid\theta\right]\bigg] + H\bigg[\mathbb{P}\left[L\mid \mathcal{V},\hat{\theta}\right]\bigg], \label{eq: target1} 
\end{align}
where $H[\cdot]$ denotes entropy.
Since the second term is constant with respect to $\theta$, maximizing the lower bound reduces to
\begin{align}
    \theta^* = \text{arg}\max_\theta\ Q\left(\theta\mid\hat{\theta}\right), \label{eq: target3}
\end{align}
where
\begin{align*}
    Q\left(\theta\mid\hat{\theta}\right) \;&:=\; \mathbb{E}_{L\sim \mathbb{P}\left[L\mid \mathcal{V},\hat{\theta}\right]}\bigg[\log\mathbb{P}\left[\mathcal{V}, L\mid\theta\right]\bigg].
\end{align*}
The validity and convergence of this reformulation follow the standard analysis in~\cite{dempster1977maximum, wu1983convergence}. 

If $L$ were observed, Equation~\eqref{eq: target3} could be optimized directly. Since $L$ is latent, we treat it as a column of missing data and adopt the Expectation-Maximization (EM) paradigm to jointly estimate both $L$ and $\theta$. The procedure alternates between two steps. In the E-step, we consider the current parameter $\hat{\theta}$ to be fixed and estimate the distribution of $L$, which provides a probabilistic imputation of the missing values. In the M-step, we treat the data as complete and optimize $Q\left( \theta \mid \hat{\theta} \right)$ to update $\theta$. This process repeats until convergence, yielding the final estimate $\theta^* = \theta^{(T)}$ after $T$ iterations. 
Both steps are tailored to exploit the structural properties of our refined fairness policy, which is key to achieving efficient computation.

\begin{algorithm}[!t]
    \caption{\textsc{Partition}} 
    \label{algo: partition}
        \KwIn{Database $\mathcal{D}$, inadmissible subset $\mathcal{I}_c$, conditioning set $\mathcal{Z}$, the number of latent states $\tau$, relative tolerance $\varepsilon$}
        \KwOut{Subsets $\mathcal{I}_{c_1}$ and $\mathcal{I}_{c_2}$}
        
        Initialize $\mathcal{I}_{c_1}, \mathcal{I}_{c_2} \gets $ random bipartition of $\mathcal{I}_c$\; \label{line: partition-init-start}
        Check feasibility: if $|\mathcal{I}_{c_1}| < \tau$ or $|\mathcal{I}_{c_2}| < \tau$, repeatedly move a randomly selected attribute $v$ from the larger subset to the smaller one until both subsets satisfy the cardinality threshold\; \label{line: partition-init-end}
        
        $n \leftarrow I(\mathcal{I}_{c_1}; \mathcal{I}_{c_2} \mid \mathcal{Z})$ (as shown in Equation~\eqref{eq: approx-cmi}),\!\!\quad $n_0\gets n$\;

        \While{true}{
            Initialize $\Delta^\star \leftarrow 0$,\!\!\quad $(\mathrm{A}^\star,\mathrm{B}^\star)\gets\bot$\;

            \tcp{(1) Apply the best 1-move}
            \For{$v \in \mathcal{I}_c$}{ \label{line: partition-move-start}
                Form candidate partition $(\mathrm{A},\mathrm{B})$ by moving $v$ to the opposite partition\;
                \If{$|\mathrm{A}|\geq \tau, |\mathrm{B}| \geq \tau$}{
                    Compute $n' \leftarrow I(\mathrm{A};\mathrm{B} \mid \mathcal{Z})$,\!\!\quad $\Delta \leftarrow n-n'$\;
                    \If{$\Delta>\Delta^\star$}{Update $\Delta^\star,(\mathrm{A}^\star,\mathrm{B}^\star)$\;} \label{line: partition-move-end}
                }
            }

            \BlankLine
            \tcp{(2) No improving 1-move: try the best 2-swap}
            \If{$\Delta^\star = 0$}{ \label{line: partition-swap-start}
                \For{$(u,v) \in \mathcal{I}_{c_1} \times \mathcal{I}_{c_2}$}{
                    Form candidate partition $(\mathrm{A},\mathrm{B})$ by moving $u$ and $v$ to the opposite partition\;
                    \If{$|\mathrm{A}| \geq \tau, |\mathrm{B}| \geq \tau$}{
                        Compute $n' \leftarrow I(\mathrm{A};\mathrm{B} \mid \mathcal{Z})$,\!\!\quad $\Delta \leftarrow n-n'$\;
                        \If{$\Delta>\Delta^\star$}{Update $\Delta^\star,(\mathrm{A}^\star,\mathrm{B}^\star)$\;} \label{line: partition-swap-end}
                    }
                }
            }
            \If{$\Delta^\star < \varepsilon  n$ {\rm or} $n \le \varepsilon  n_0$}{\textbf{break}}
            Update $(\mathcal{I}_{c_1}, \mathcal{I}_{c_2})\leftarrow(\mathrm{A}^\star,\mathrm{B}^\star)$,\!\!\quad $n\leftarrow n-\Delta^\star$\;
        }
        \Return $\mathcal{I}_{c_1}, \mathcal{I}_{c_2}$
\end{algorithm}

\subsubsection{Analysis of E-step}

At $k$-th iteration, parameters are fixed as $\hat{\theta}^{(k)}$, initialized randomly for $k{=}1$ and updated as $\hat{\theta}^{(k)}:=\theta^{(k-1)}$ if $k{>}1$. The goal is to infer the posterior of $L$ given $\mathcal{V}$:
\begin{align*}
    \mathbb{P}\left[L\mid \mathcal{V}; \hat{\theta}^{(k)}\right] 
    = \frac{\mathbb{P}\left[\mathcal{V}, L; \hat{\theta}^{(k)}\right]}{\mathbb{P}\left[\mathcal{V}; \hat{\theta}^{(k)}\right]} 
    \propto \mathbb{P}\left[\mathcal{V}, L; \hat{\theta}^{(k)}\right].
\end{align*}
By factorizing the refined DAG, as shown in Equation~\eqref{eq: Pgl'_approx}, we obtain: 
\begin{align}
    \mathbb{P}\left[L\mid \mathcal{V}; \hat{\theta}^{(k)}\right] 
    &\propto\ \mathbb{P}\left[L; \hat{\theta}^{(k)}\right] \cdot \mathbb{P}\left[\mathcal{I}_{c_1}\mid \Pi_{c_1}'; \hat{\theta}^{(k)}\right] \notag \\ 
    &\quad \cdot \mathbb{P}\left[\mathcal{I}_{c_2}\mid \Pi_{c_2}'; \hat{\theta}^{(k)}\right] \cdot \mathbb{P}\left[Y\mid \Pi_Y'; \hat{\theta}^{(k)}\right]. \label{eq: E-target}
\end{align}
Since only four terms involve the latent attribute $L$, we compute their product and normalize to obtain the posterior, instead of computing the full joint distribution. This yields a soft assignment of $L$ for each record, effectively completing the data over $\mathcal{V}\cup{L}$ for the subsequent M-step.

\subsubsection{Analysis of M-step} 

In this step, we update parameters using the posterior $\mathbb{P}\left[L\mid \mathcal{V}; \hat{\theta}^{(k)}\right]$ from the E-step. Let \( \omega_j(l) := \mathbb{P} \left[ L=l \mid \mathcal{V}^j; \hat{\theta}^{(k)} \right] \) denote the posterior weight of latent state $l$ for record $j$. The objective is to maximize:
\begin{align}
    \theta^{(k)} &= \text{arg}\max_\theta\ \sum_{j}{\sum_{l}{\bigg( \omega_j(l) \cdot \log\mathbb{P}\left[\mathcal{V}^j, L=l; \theta\right] \bigg)}}. \label{eq: M1}
\end{align}
Using the DAG factorization in Equation~\eqref{eq: Pgl'_approx}, 
\begin{align*}
    \mathbb{P}\left[\mathcal{V}, L; \theta\right] \propto \mathbb{P}[L; \theta] \cdot \mathbb{P}[\mathcal{I}_{c_1}\mid \Pi_{c_1}'; \theta] \cdot \mathbb{P}[\mathcal{I}_{c_2}\mid \Pi_{c_2}'; \theta] \cdot \mathbb{P}[Y\mid \Pi_Y'; \theta],
\end{align*}
the objective decomposes additively across local conditionals. In the discrete setting, each local conditional is multinomial, so parameters admit EM closed-form updates as normalized expected counts. 
Let $N_E(\cdot)$ denote expectations under the posterior. The updates are:
\begin{align}
    \theta_l = \frac{N_E(l)}{\sum_{l}{N_E(l)}}, \quad\quad
    &\theta_{i_{c_1} \mid \pi_{c_1}} = \frac{N_E(i_{c_1}, \pi_{c_1})}{\sum_{i_{c_1}}{N_E(i_{c_1}, \pi_{c_1})}}, \notag \\
    \theta_{y \mid \pi_y} = \frac{N_E(y, \pi_y)}{\sum_{y}{N_E(y, \pi_y)}}, \quad\quad
    &\theta_{i_{c_2} \mid \pi_{c_2}} = \frac{N_E(i_{c_2}, \pi_{c_2})}{\sum_{i_{c_2}}{N_E(i_{c_2}, \pi_{c_2})}}, \label{eq: p_local}
\end{align}
where \[ N_E(y, \pi_y) = \sum_{j}\sum_{l} \bigg( \mathbb{P}\left[L=l\mid \mathcal{V}^j; \hat{\theta}^{(k)}\right] \cdot \mathbb{I}\left[Y^j = y, \Pi_Y'^j=\pi_y \right] \bigg), \] and analogously for other factors. 
The final $\theta^{(k)}$ is formed by concatenating all local parameters.

\subsubsection{Overall.}

\begin{algorithm}[!t]
    \caption{\textsc{ParameterEstimation}} 
    \label{algo: estimation}
        \KwIn{Database $\mathcal{D}$ with attribute set $\mathcal{V}=\mathcal{X}\cup\mathcal{I}_{o}\cup\mathcal{I}_{c_1}\cup\mathcal{I}_{c_2}\cup\mathcal{A}\cup\mathcal{W}\cup\{Y\}$, maximum number of iterations $n$, convergence threshold $\eta$}
        \KwOut{Parameters $\theta^*$} 

        Initialize $\theta_l, \theta_{i_{c_1} \mid \pi_{c_1}}, \theta_{i_{c_2} \mid \pi_{c_2}}, \theta_{y \mid \pi_y}$ randomly\;
        
        \For{$i \gets 0 \text{ to } n$}{ \label{line: em start}
            $\mathbb{P}_{L|\mathcal{V}} \gets$ Equation~\eqref{eq: E-target}\;
            Update $\theta_l, \theta_{i_{c_1} \mid \pi_{c_1}}, \theta_{i_{c_2} \mid \pi_{c_2}}, \theta_{y \mid \pi_y}$ based on Equation~\eqref{eq: p_local}\;
            Check log-likelihood according to Equation~\eqref{eq: l}: if the change is less than $\eta$, stop the iteration\;
            $i \gets i+1$
        } \label{line: em end}

       Compute $\theta^*$ based on Equations~\eqref{eq: p_x} and~\eqref{eq: p_all}\;
        \Return{$\theta^*$}\;
\end{algorithm}

After each M-step, we calculate the refined log-likelihood as
\begin{align}
    \mathcal{L}^{(k)} \propto \sum_{j}
    \log\!\left( \sum_{l} \theta_{l}^{(k)} \cdot \theta_{\,i_{c_1}^j \mid \pi_{c_1}^{\prime\,j}(l)}^{(k)} \cdot \theta_{\,i_{c_2}^j \mid \pi_{c_2}^{\prime\,j}(l)}^{(k)} \cdot \theta_{\,y^j \mid \pi_{y}^{\prime\,j}(l)}^{(k)} \right) \label{eq: l}
\end{align}
and terminate the procedure when $\vert \mathcal{L}^{(k)} - \mathcal{L}^{(k-1)} \vert \leq \eta$ or the iteration count reaches $n$. Once converged, we obtain the optimal parameters for the local structures around $L$, $\mathcal{I}_{c_1}$, $\mathcal{I}_{c_2}$, and $Y$. For the remaining attributes, denoted as $\mathcal{X} = \mathcal{V}\setminus(\mathcal{I}_c\cup\{Y\})$, the parameters are constant and can be computed directly via MLE:
\begin{align}
    \theta_x = \frac{N(x)}{\sum_{x'} N(x')}. \label{eq: p_x}
\end{align}
The final parameter set is formed by concatenating the estimates over all local distributions:
\begin{align}
    \theta^* = (\theta_x, \theta_l, \theta_{i_{c_1} \mid \pi_{c_1}}, \theta_{i_{c_2} \mid \pi_{c_2}}, \theta_{y \mid \pi_y}). \label{eq: p_all}
\end{align}
Note that high-dimensional distributions can be approximated using techniques in~\cite{chow1968approximating}; since this is not our focus, we treat all required distributions as given. The full parameter estimation procedure is summarized in Algorithm~\ref{algo: estimation}.

\section{Experiments} \label{sec: exp}

In this section, we evaluate the effectiveness of \latent\ through extensive experiments. We first describe the experimental setup in Section~\ref{subsec: exp-setting}. We then compare \latent\ with baselines by reporting end-to-end performance in Section~\ref{subsec: exp-endtoend} and visualizing missing-relationship recovery in Section~\ref{subsec: exp-recovery}. Finally, we provide a deeper analysis of \latent\ by testing the dependence between the learned latent attribute and sensitive attributes in Section~\ref{subsec: exp-indep}, examining how key parameters affect performance in Section~\ref{subsec: exp-param}, and profiling runtime in Section~\ref{subsec: exp-time}.

\subsection{Setup} \label{subsec: exp-setting}
We implement the proposed \latent\ framework in Python\footnote{Available at \url{https://github.com/iamzhengying/LatentPre.git}}. All experiments are conducted on a machine with two Xeon(R) Gold 6326@2.90 GHz CPUs and 256GB of DRAM.

\vspace{2mm}
\noindent\textbf{Datasets.}
We evaluate the performance of \latent\ on three real-world datasets, including Adult~\cite{adultData2024}, COMPAS~\cite{compasData2024}, and Census-KDD~\cite{censuskddData2024}, and two synthetic datasets generated following~\cite{markakis2024press}. The key statistics of these datasets are summarized in Table~\ref{tab: data}.

\begin{table}[t]
    \centering
    \caption{Dataset statistics.}
    \label{tab: data}
    \begin{small}
    \begin{tabular}{lccc}
        \toprule
        \textbf{Dataset} & \textbf{\#Records} & \textbf{\#Attributes} & \textbf{Avg. Dom} \\
        \midrule
        Adult       & 32,561    & 13    & 13.33 \\
        COMPAS      & 6,130     & 8     & 4     \\
        Census-KDD  & 196,130   & 28    & 11.67 \\
        Synthetic (Section~\ref{subsec: exp-recovery})   & 50,000    & 7     & 4     \\
        Synthetic (Section~\ref{subsec: exp-time})   & 50,000--500,000    & 15--60     & 4     \\
        \bottomrule
    \end{tabular}
    \end{small}
\end{table}

\vspace{2mm}
\noindent\textbf{Baselines.}
We compare \latent\ with state-of-the-art pre-processing methods, Cap-MS~\cite{salimi2019interventional}, Cap-MF~\cite{salimi2019interventional}, OTClean~\cite{pirhadi2024otclean}, and CausalPre~\cite{zheng2025causalpre}. For all of them, we use the source code provided by the authors. Since Cap-MS and Cap-MF are designed only for saturated CI constraints over all attributes, we extend them by first processing $\mathcal{D}_{\mathcal{V}\setminus\mathcal{W}}$ and then sampling $\mathcal{W}$ from its conditional distribution to complete the entire $\mathcal{D}'$. For OTClean~\cite{pirhadi2024otclean}, we report two variants to ensure a fair comparison, following the evaluation protocol in~\cite{zheng2025causalpre}: (i) OTClean-RT, which directly uses the provided codebase and thus modifies the testing data, and (ii) OTClean, which processes only the training data and leaves the testing set unchanged. We also report results on the original dataset, denoted as ``Original'', and the dataset with all sensitive and inadmissible attributes dropped, denoted as ``Dropped''. The former provides a baseline for understanding the dataset’s utility and inherent discrimination without any intervention, while the latter reflects what a naive brute-force approach can achieve.

\vspace{2mm}
\noindent\textbf{Measurement and metrics.}
We evaluate each framework by measuring the quality of its processed data along two dimensions: utility and fairness. To achieve this, each raw dataset is divided into a training set and a testing set; only the training set is pre-processed, while the testing set remains unchanged to simulate a realistic environment. Two standard classifiers, random forest (RF) and multilayer perceptron (MLP), are then trained on the processed training data and evaluated on the testing data. The resulting predictive performance then serves as a proxy for data quality, as higher-quality training data enables models to produce more accurate and less biased predictions. For utility, we report the AUC score, where a higher value indicates better generalization and thus higher data quality. For fairness, we use the Ratio of Observational Discrimination (ROD)~\cite{salimi2019interventional}, which measures deviation from fairness; formally 
\begin{align*}
    \text{ROD} = \max_{\text{\scriptsize$\mathcal{S}$}_0, \text{\scriptsize$\mathcal{S}$}_1 \in Dom(\mathcal{S})} \frac{1}{|Dom(\mathcal{A})|} \sum_{a \in Dom(\mathcal{A})} \overline{\text{ROD}} (\text{\scriptsize$\mathcal{S}$}_0, \text{\scriptsize$\mathcal{S}$}_1; \hat{Y} \mid a),
\end{align*}
where 
\begin{align*}
    \overline{\text{ROD}} (\text{\scriptsize$\mathcal{S}$}_0, \text{\scriptsize$\mathcal{S}$}_1; \hat{Y} \mid a) 
    = \frac{\mathbb{P}[\hat{Y}=1 \mid \text{\scriptsize$\mathcal{S}$}_0, a] \cdot \mathbb{P}[\hat{Y}=0 \mid \text{\scriptsize$\mathcal{S}$}_1, a]}{\mathbb{P}[\hat{Y}=0 \mid \text{\scriptsize$\mathcal{S}$}_0, a] \cdot \mathbb{P}[\hat{Y}=1 \mid \text{\scriptsize$\mathcal{S}$}_1, a]}.
\end{align*}
For consistent interpretation, we report the normalized absolute logarithm of the ROD value; a value of $0$ indicates no discrimination, and larger values indicate greater discrimination. Finally, to ensure robustness and reduce variability, all evaluations are conducted using five-fold cross-validation.

\begin{figure*}[htbp]
    \centering
    \hspace{2mm} \includegraphics[width=0.9\linewidth]{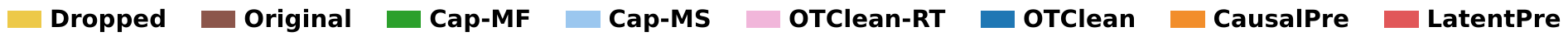}
    \vspace{2mm}

    \begin{minipage}[t]{\textwidth}
        \centering
        \begin{minipage}[t]{0.018\linewidth}
            \centering
            \hspace{0mm}\includegraphics[width=\linewidth, trim=0mm 0mm 0.4mm 0mm, clip]{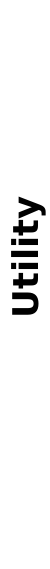}
            \vspace{-2mm}
        \end{minipage}
        \begin{minipage}[t]{0.32\linewidth}
            \centering
            \hspace{-2mm}\includegraphics[width=\linewidth]{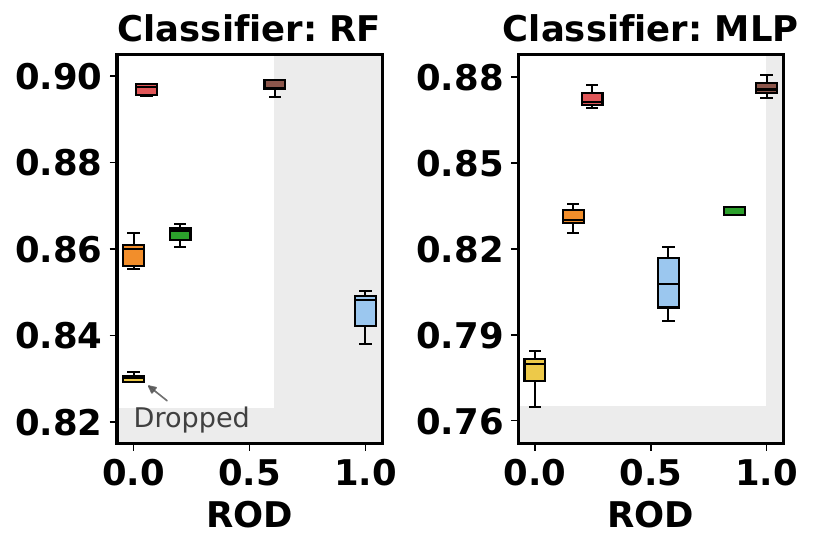}
            \vspace{-2mm} {\small (a) $\mathsf{Adult}$}
        \end{minipage}
        \begin{minipage}[t]{0.32\linewidth}
            \centering
            \hspace{-2mm}\includegraphics[width=\linewidth]{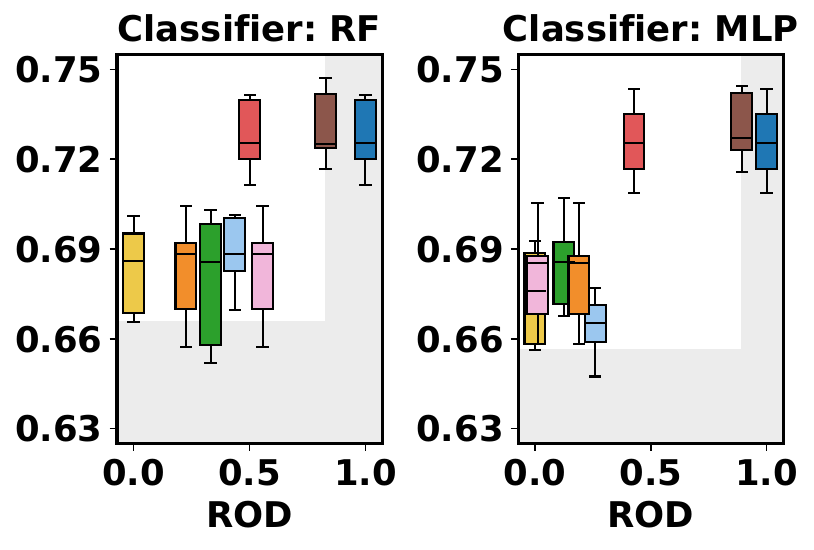}
            \vspace*{-2mm} {\small (b) $\mathsf{COMPAS}$}
        \end{minipage}
        \begin{minipage}[t]{0.32\linewidth}
            \centering
            \hspace{-2mm}\includegraphics[width=\linewidth]{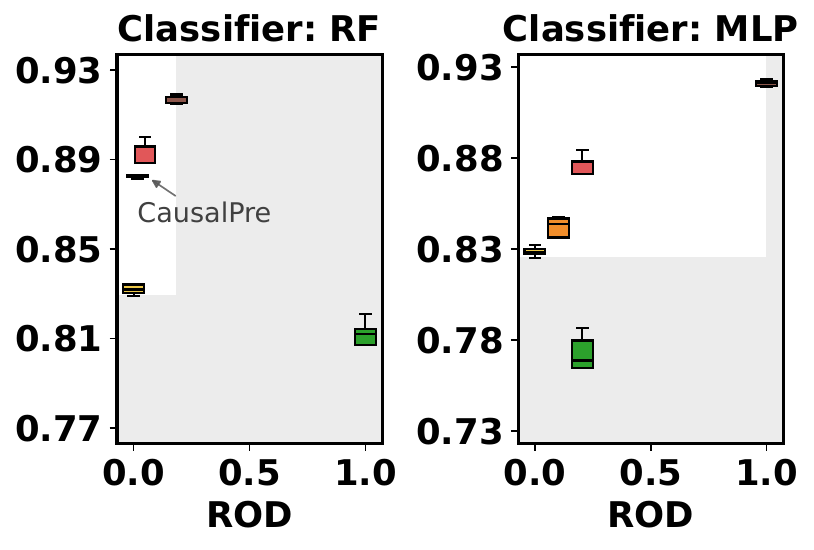}
            \vspace{-2mm} {\small (c) $\mathsf{Census\text{-}KDD}$}
        \end{minipage}
        \captionof{figure}{End-to-end performance under attribute ambiguity across three datasets, measured by AUC (utility) and ROD (fairness). Each box summarizes the 5-fold AUC values and the corresponding average ROD for one approach; higher AUC and lower ROD indicate better performance. Shaded regions indicate invalid results.}
        \label{fig: end2end_imperfect}
    \end{minipage}

    \vspace{6mm}

    \begin{minipage}[t]{\textwidth}
        \centering
        \begin{minipage}[t]{0.018\linewidth}
            \centering
            \hspace{0mm}\includegraphics[width=\linewidth, trim=0mm 0mm 0.4mm 0mm, clip]{figure/end-to-end_y-axis.pdf}
            \vspace{-2mm}
        \end{minipage}
        \begin{minipage}[t]{0.32\linewidth}
            \centering
            \hspace{-2mm}\includegraphics[width=\linewidth]{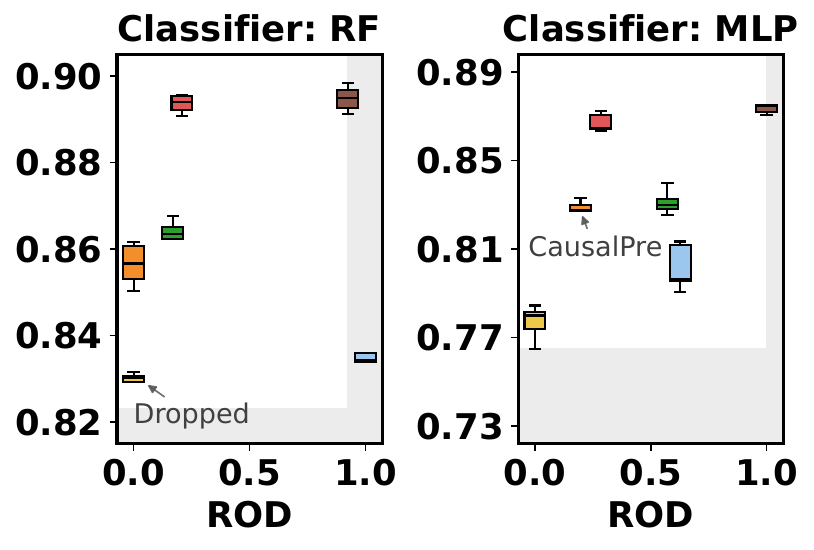}
            \vspace{-2mm} {\small (a) $\mathsf{Adult}$}
        \end{minipage}
        \begin{minipage}[t]{0.32\linewidth}
            \centering
            \hspace{-2mm}\includegraphics[width=\linewidth]{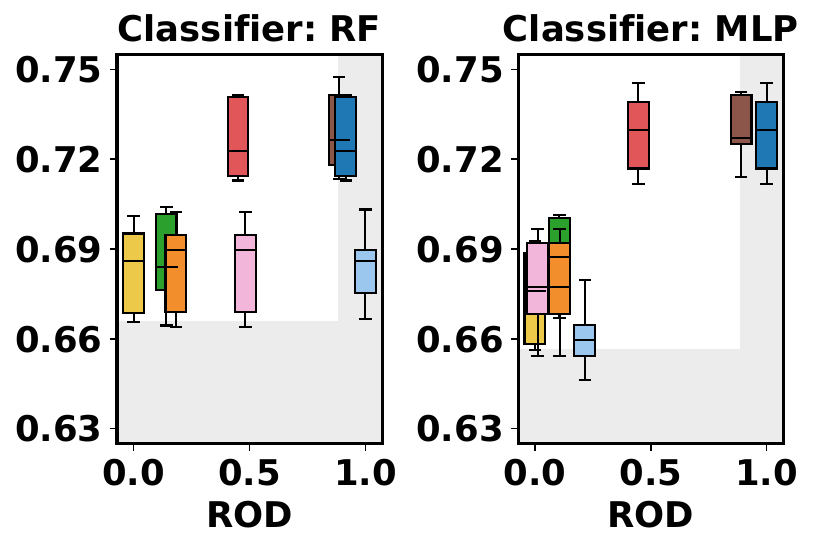}
            \vspace*{-2mm} {\small (b) $\mathsf{COMPAS}$}
        \end{minipage}
        \begin{minipage}[t]{0.32\linewidth}
            \centering
            \hspace{-2mm}\includegraphics[width=\linewidth]{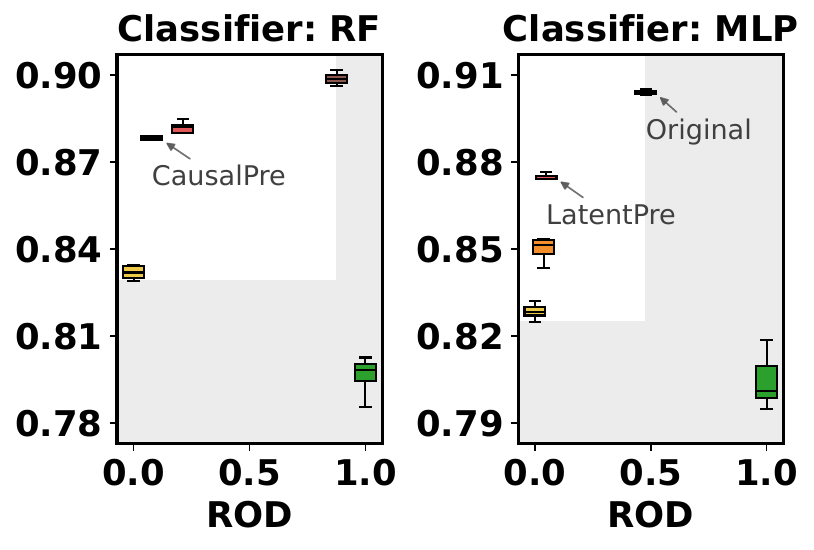}
            \vspace{-2mm} {\small (c) $\mathsf{Census\text{-}KDD}$}
        \end{minipage}
        \captionof{figure}{End-to-end performance under attribute absence across three datasets.}
        \label{fig: end2end_missing}
    \end{minipage}

    \vspace{6mm}

    \begin{minipage}[t]{\textwidth}
        \centering
        \begin{minipage}[t]{0.018\linewidth}
            \centering
            \hspace{0mm}\includegraphics[width=\linewidth, trim=0mm 0mm 0.4mm 0mm, clip]{figure/end-to-end_y-axis.pdf}
            \vspace{-2mm}
        \end{minipage}
        \begin{minipage}[t]{0.32\linewidth}
            \centering
            \hspace{-2mm}\includegraphics[width=\linewidth]{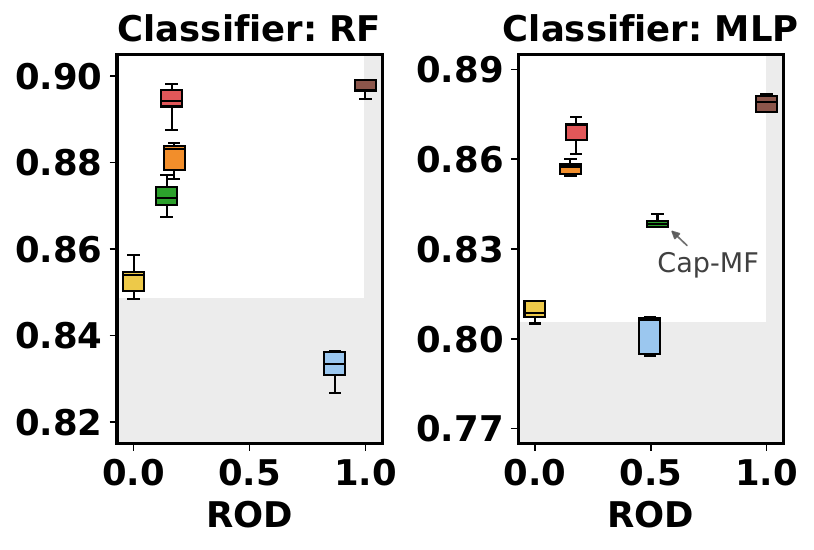}
            \vspace{-2mm} {\small (a) $\mathsf{Adult}$}
        \end{minipage}
        \begin{minipage}[t]{0.32\linewidth}
            \centering
            \hspace{-2mm}\includegraphics[width=\linewidth]{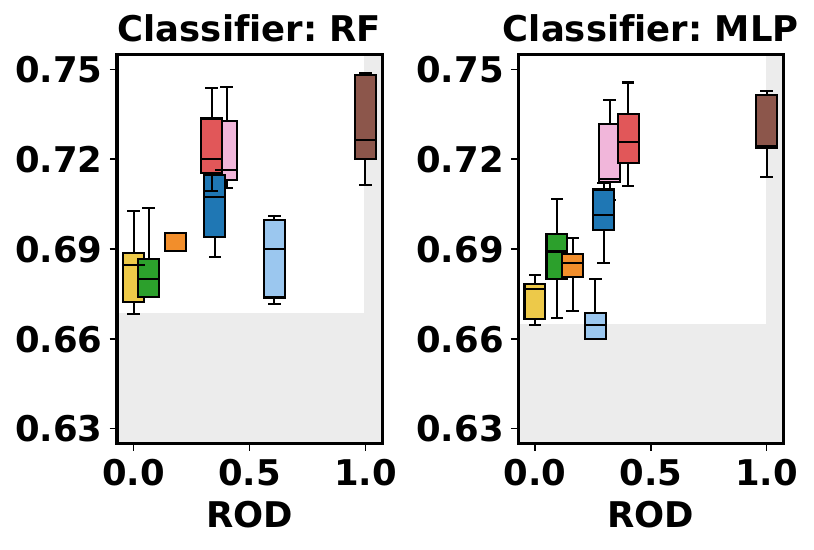}
            \vspace*{-2mm} {\small (b) $\mathsf{COMPAS}$}
        \end{minipage}
        \begin{minipage}[t]{0.32\linewidth}
            \centering
            \hspace{-2mm}\includegraphics[width=\linewidth]{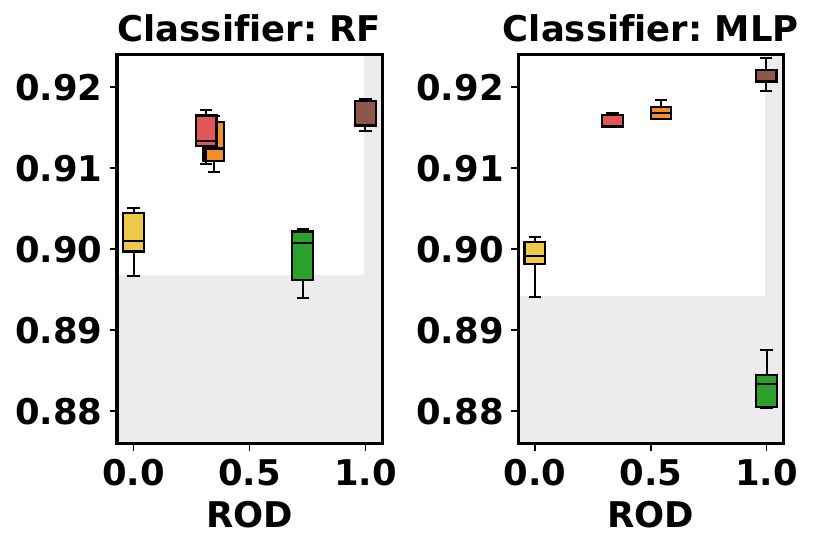}
            \vspace{-2mm} {\small (c) $\mathsf{Census\text{-}KDD}$}
        \end{minipage}
        \captionof{figure}{End-to-end performance under perfect attribute space across three datasets.}
        \label{fig: end2end_normal}
    \end{minipage}
\end{figure*}

\subsection{End-to-End Performance Evaluation} \label{subsec: exp-endtoend}

We evaluate the end-to-end performance of \latent\ under three scenarios: (i) attribute ambiguity, (ii) attribute absence, and (iii) perfect attribute space. Results are shown in Figures~\ref{fig: end2end_imperfect},~\ref{fig: end2end_missing}, and~\ref{fig: end2end_normal}, respectively. Performance is measured using box plots of 5-fold AUC and average ROD. Cap-MS is excluded from Census-KDD due to excessive runtime (over 24 hours). OTClean is omitted from Adult and Census-KDD because it runs out of memory on our machine. Default parameters for \latent\ are set as follows: the number of latent states $\tau=6$, $3$, and $6$ for Adult, COMPAS, and Census-KDD; the number of EM iterations $n=800$; and convergence threshold $\eta=0.001$. A method is deemed invalid if its fairness is worse than the unprocessed ``Original'' baseline or its utility lower than the ``Dropped'' case; these regions are shaded in gray. Key findings for each scenario are summarized below. 

\subsubsection{Attribute Ambiguity.}\label{subsec: exp-endtoend-ambiguity}
This setting evaluates robustness when certain utility-relevant but ambiguous attributes are conservatively treated as inadmissible. To simulate this scenario, we randomly reassign a subset of utility-relevant attributes from the admissible or additional set to the inadmissible set. Specifically, we reassign one attribute for the Adult and COMPAS datasets, and six for the larger Census-KDD dataset. We use a larger number for Census-KDD because it contains many structurally similar or semantically related attributes, which practitioners may be more inclined to conservatively label as inadmissible together.
Figure~\ref{fig: end2end_imperfect} reports the results. Across all datasets, \latent\ significantly outperforms the baselines in utility and consistently remains valid. This holds even on COMPAS, a low-dimensional dataset where a limited attribute space makes the causal structure more fragile and fairness-aware processing more challenging. Even in this setting, \latent\ remains robust, demonstrating the effectiveness of latent augmentation in recovering and reinforcing the underlying structure from sparse observations. In addition, \latent\ reduces average discrimination by 80\%, 46\%, and 77\%, with only 0.2\%, 0.5\%, and 4.2\% utility loss on datasets Adult, COMPAS, and Census-KDD, respectively. The fairness gain on COMPAS is relatively smaller, and we defer the explanation to Section~\ref{subsec: exp-endtoend-discussion}. Nevertheless, this result is still acceptable, as the utility loss is minimal and the fairness gain comes at negligible cost. On Census-KDD, performance declines slightly because we treat more attributes as imperfect to reflect a more challenging and realistic setting, yet \latent\ still achieves a 2.2\% average utility improvement over the best-performing baseline.

\subsubsection{Attribute Absence.}\label{subsec: exp-endtoend-absence}
This setting evaluates performance when some meaningful attributes are entirely missing from the dataset. To simulate this scenario, we randomly remove utility-relevant attributes from the admissible or additional set: one for Adult and COMPAS, and six for Census-KDD. The results in Figure~\ref{fig: end2end_missing} show that \latent\ substantially outperforms all baselines in utility while also reducing discrimination. Specifically, \latent\ lowers average discrimination by 75\%, 50\%, and 83\%, with only 0.4\%, 0.3\%, and 2.6\% utility loss on datasets Adult, COMPAS, and Census-KDD, respectively. The overall trend is consistent with the results observed under attribute ambiguity. 

\subsubsection{Perfect Attribute Space.}\label{subsec: exp-endtoend-perfect}
This setting evaluates performance when all attributes are correctly collected and specified, as in prior work~\cite{salimi2019interventional, zheng2025causalpre}. The results in Figure~\ref{fig: end2end_normal} show that \latent\ still achieves utility gains in most cases, even in this idealized setting. This indicates that attribute ambiguity or absence may naturally arise in practice, and \latent\ can effectively identify and correct such hidden imperfections through latent modeling. 
Note that although \latent\ sometimes attains the best ROD, introducing the latent attribute does not itself contribute to fairness. Fairness is ensured by explicitly removing all unfair causal pathways from sensitive or inadmissible attributes to the label. Meanwhile, introducing the latent attribute requires pruning certain dependencies among inadmissible attributes to ensure identifiability. This identifiability-driven pruning can also reduce spurious high-dimensional correlations that powerful classifiers might otherwise inadvertently exploit as unfair influence. This side effect is typically more pronounced in higher-dimensional datasets such as Census-KDD.

\begin{figure*}[htbp]
    \centering
    \begin{subfigure}{\linewidth}
      \includegraphics[width=\linewidth]{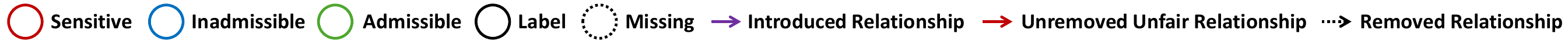}
    \end{subfigure} \\
    \vspace{2.5mm}
    \begin{subfigure}{0.148\linewidth}
      \includegraphics[width=\linewidth]{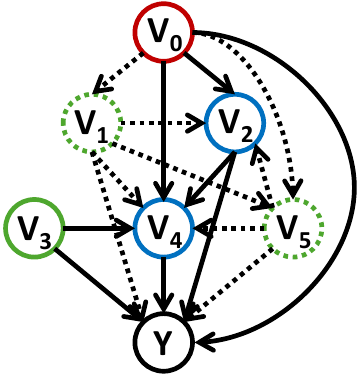}
      \caption{Missing.}
      \label{fig: dag_missing}
    \end{subfigure} \quad
    \begin{subfigure}{0.11\linewidth}
      \includegraphics[width=\linewidth]{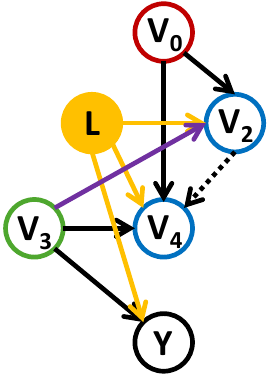}
      \caption{\latent.}
      \label{fig: dag_latent}
    \end{subfigure} \quad
    \begin{subfigure}{0.11\linewidth}
      \includegraphics[width=\linewidth]{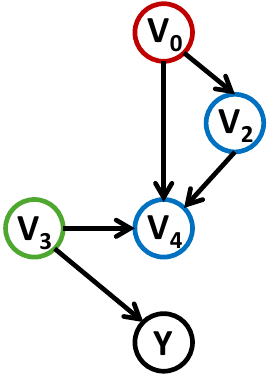}
      \caption{CausalPre.}
      \label{fig: dag_lazy}
    \end{subfigure} \quad
    \begin{subfigure}{0.11\linewidth}
      \includegraphics[width=\linewidth]{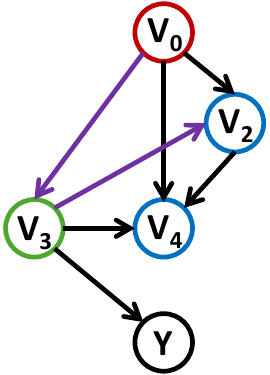}
      \caption{Cap-MF.}
      \label{fig: dag_mf}
    \end{subfigure} \quad
    \begin{subfigure}{0.126\linewidth}
      \includegraphics[width=\linewidth]{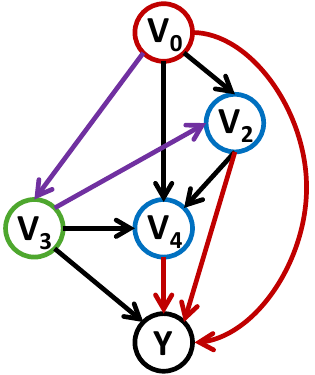}
      \caption{Cap-MS.}
      \label{fig: dag_ms}
    \end{subfigure} \quad
    \begin{subfigure}{0.11\linewidth}
      \includegraphics[width=\linewidth]{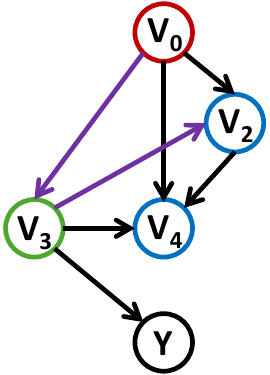}
      \caption{OTClean.}
      \label{fig: dag_ot}
    \end{subfigure} \quad
    \begin{subfigure}{0.126\linewidth}
      \includegraphics[width=\linewidth]{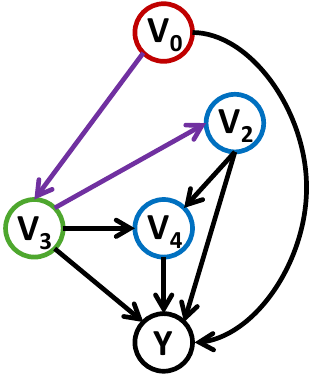}
      \caption{OTClean-RT.}
      \label{fig: dag_otrt}
    \end{subfigure} 
    \caption{Relationship recovery with $\mathcal{S}{=}\{V_0\}, \mathcal{I}{=}\{V_2, V_4\}, \mathcal{A}{=}\{V_1, V_3, V_5\}$, and missing attributes $\{V_1, V_5\}$.}
    \label{fig: dag}

\end{figure*}

\subsubsection{Additional Discussion} \label{subsec: exp-endtoend-discussion}
As noted, \latent\ yields smaller fairness improvements on COMPAS across all three scenarios. This outcome is expected given two inherent properties of the dataset. First, COMPAS exhibits strong correlations between utility-related and sensitive information, making the utility–fairness trade-off particularly pronounced. Although the introduced latent attribute is not causally influenced by the sensitive attribute (and we further evaluate their independence in Section~\ref{subsec: exp-indep}), powerful classifiers may still exploit spurious high-dimensional correlations in the data. This phenomenon is not unique to \latent: OTClean shows a similarly sharp utility–fairness trade-off on COMPAS; moreover, since OTClean is not designed for imperfect scenarios, its behavior in such settings is even less controllable. Second, COMPAS is relatively small, which makes the empirical distribution less stable and further increases the risk that powerful classifiers capture spurious correlations. To validate this explanation, we conduct an auxiliary experiment that enlarges COMPAS via repeated sampling (by increasing $k$ in Algorithm~\ref{algo: complete}, Line~\ref{line: sample k}). With the MLP classifier under this setting, \latent\ achieves a lower (thus better) ROD of 0.30, which is comparable to the gains observed on other datasets. In summary, the limited gains on COMPAS primarily reflect the dataset’s challenging characteristics rather than a weakness of \latent. Importantly, \latent\ still delivers meaningful fairness improvements with almost no sacrifice in utility, making the outcome practically satisfactory.

\begin{table}[t]
    \centering 
    \caption{Independence analysis of $L$ and $\mathcal{S}$.}
    \label{tab: indep_analysis_shorten}
    \begin{small}
    \begin{tabular}{c|ll|cc}
        \toprule
        Dataset & Scen. & Sens. Attr. & NMI & chi-square \\
        \midrule

        
        \multirow{9}{*}{\rotatebox[origin=c]{90}{Adult}}
        
        & \multirow{3}{*}{Amb.}
        & sex
        & $(3.1 \pm 0.8) \times 10^{-5}$
        & $0.71 \pm 0.12$ \\
        
        && race
        & $(1.7 \pm 0.2) \times 10^{-4}$
        & $0.74 \pm 0.14$ \\
        
        && sex + race
        & $(3.6 \pm 0.4) \times 10^{-4}$
        & $0.58 \pm 0.21$ \\
        \cmidrule(lr){2-5}
        
        & \multirow{3}{*}{Abs.}
        & sex
        & $(2.9 \pm 1.5) \times 10^{-5}$
        & $0.74 \pm 0.20$ \\
        
        && race
        & $(2.0 \pm 0.1) \times 10^{-4}$
        & $0.58 \pm 0.09$ \\
        
        && sex + race
        & $(3.8 \pm 0.3) \times 10^{-4}$
        & $0.51 \pm 0.16$ \\
        \cmidrule(lr){2-5}

        & \multirow{3}{*}{Perf.}
        & sex 
        & $(2.7 \pm 0.6) \times 10^{-5}$
        & $0.76 \pm 0.09$ \\
        
        && race
        & $(2.0 \pm 0.2) \times 10^{-4}$
        & $0.56 \pm 0.12$ \\
        
        && sex + race
        & $(3.8 \pm 0.4) \times 10^{-4}$
        & $0.53 \pm 0.22$ \\
        \midrule


        \multirow{9}{*}{\rotatebox[origin=c]{90}{COMPAS}}
        
        & \multirow{3}{*}{Amb.}
        & sex
        & $(5.9 \pm 3.0) \times 10^{-5}$
        & $0.74 \pm 0.13$ \\
        
        && race
        & $(2.3 \pm 0.5) \times 10^{-4}$
        & $0.75 \pm 0.10$ \\
        
        && sex + race
        & $(4.4 \pm 1.1) \times 10^{-4}$
        & $0.86 \pm 0.09$ \\
        \cmidrule(lr){2-5}
        
        & \multirow{3}{*}{Abs.}
        & sex
        & $(1.1 \pm 1.3) \times 10^{-4}$
        & $0.66 \pm 0.29$ \\
        
        && race
        & $(1.8 \pm 0.3) \times 10^{-4}$
        & $0.85 \pm 0.06$ \\
        
        && sex + race
        & $(4.8 \pm 1.1) \times 10^{-4}$
        & $0.82 \pm 0.11$ \\
        \cmidrule(lr){2-5}

        & \multirow{3}{*}{Perf.}
        & sex
        & $(3.5 \pm 2.4) \times 10^{-5}$
        & $0.83 \pm 0.10$ \\
        
        && race
        & $(2.2 \pm 0.5) \times 10^{-4}$
        & $0.76 \pm 0.09$ \\
        
        && sex + race
        & $(4.4 \pm 1.6) \times 10^{-4}$
        & $0.83 \pm 0.15$ \\
        \midrule
        

        \multirow{9}{*}{\rotatebox[origin=c]{90}{Census-KDD}}

        & \multirow{3}{*}{Amb.}
        & sex
        & $(2.0 \pm 1.0) \times 10^{-5}$
        & $0.28 \pm 0.30$ \\
        
        && race
        & $(4.1 \pm 1.3) \times 10^{-5}$
        & $0.73 \pm 0.22$ \\
        
        && sex + race
        & $(8.6 \pm 2.3) \times 10^{-5}$
        & $0.57 \pm 0.30$ \\
        \cmidrule(lr){2-5}
        
        & \multirow{3}{*}{Abs.}
        & sex
        & $(2.2 \pm 0.5) \times 10^{-5}$
        & $0.12 \pm 0.08$ \\
        
        && race
        & $(4.3 \pm 0.5) \times 10^{-5}$
        & $0.61 \pm 0.13$ \\
        
        && sex + race
        & $(8.5 \pm 0.9) \times 10^{-5}$
        & $0.43 \pm 0.17$ \\
        \cmidrule(lr){2-5}

        & \multirow{3}{*}{Perf.}
        & sex
        & $(1.2 \pm 0.1) \times 10^{-5}$
        & $0.31 \pm 0.06$ \\
        
        && race
        & $(2.7 \pm 0.3) \times 10^{-5}$
        & $0.86 \pm 0.07$ \\
        
        && sex + race
        & $(6.2 \pm 0.3) \times 10^{-5}$
        & $0.70 \pm 0.08$ \\
        \bottomrule
    \end{tabular}
    \end{small}
\end{table}

\subsection{Missing Relationship Recovery} \label{subsec: exp-recovery}

To evaluate how well \latent\ and the baselines preserve potential causal relationships when some attributes are missing, we generate a synthetic dataset from the DAG in Figure~\ref{fig: dag_missing}. The dataset contains 50,000 records and 7 attributes. Note that the attribute space is deliberately kept small to make DAG visualization feasible. 
\begin{figure}
    \centering
    \hspace{1.7mm} \includegraphics[width=0.345\linewidth]{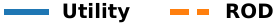}
    \vspace{-2mm}
\end{figure}

\begin{figure}[t]
    \centering
    \begin{subfigure}[t]{0.49\linewidth}
        \centering
        \includegraphics[width=\linewidth, trim=2.1mm 2.5mm 2.1mm 0mm, clip]{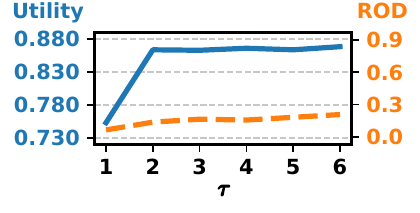}
        \par\vspace{-0.5mm}
        \caption{Attribute ambiguity.}
        \label{fig: param_l_imperfect}
    \end{subfigure}
    \begin{subfigure}[t]{0.49\linewidth}
        \centering
        \includegraphics[width=\linewidth, trim=2.1mm 2.5mm 2.1mm 0mm, clip]{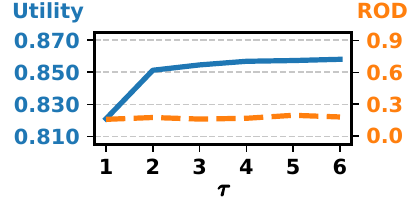}
        \par\vspace{-0.5mm}
        \caption{Attribute absence.}
        \label{fig: param_l_missing}
    \end{subfigure}
    \vspace{-1mm}
    \caption{Varying the number of latent states $\tau$.}
    \label{fig: param_l}
\end{figure}

\vspace{1mm}

\begin{figure}[t]
    \centering
    \begin{subfigure}[t]{0.49\linewidth}
        \centering
        \includegraphics[width=\linewidth, trim=2.1mm 2.5mm 2.1mm 0mm, clip]{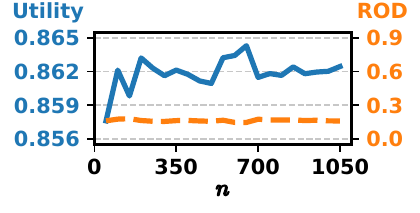}
        \par\vspace{-0.5mm}
        \caption{Attribute ambiguity.}
        \label{fig: param_n_imperfect}
    \end{subfigure}
    \begin{subfigure}[t]{0.49\linewidth}
        \centering
        \includegraphics[width=\linewidth, trim=2.1mm 2.5mm 2.1mm 0mm, clip]{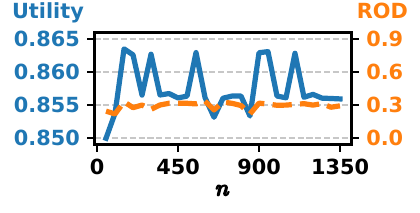}
        \par\vspace{-0.5mm}
        \caption{Attribute absence.}
        \label{fig: param_n_missing}
    \end{subfigure} 
    \vspace{-1mm}
    \caption{Varying the number of iterations $n$.}
    \label{fig: param_n}
\end{figure}

\vspace{1mm}

\begin{figure}[t]
    \centering
    \begin{subfigure}[t]{0.49\linewidth}
        \centering
        \includegraphics[width=\linewidth, trim=2.1mm 2.3mm 2.1mm 0mm, clip]{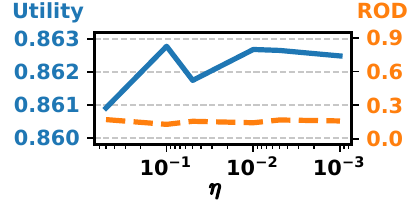}
        \par\vspace{-0.5mm}
        \caption{Attribute ambiguity.}
        \label{fig: param_thr_imperfect}
    \end{subfigure} 
    \begin{subfigure}[t]{0.49\linewidth}
        \centering
        \includegraphics[width=\linewidth, trim=2.1mm 2.3mm 2.1mm 0mm, clip]{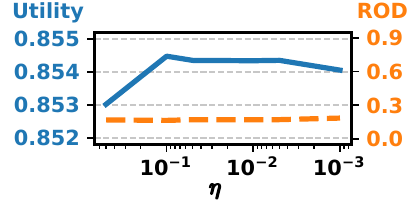}
        \par\vspace{-0.5mm}
        \caption{Attribute absence.}
        \label{fig: param_thr_missing}
    \end{subfigure} 
    \vspace{-1mm}
    \caption{Varying the convergence threshold $\eta$.}
    \label{fig: param_thr}
\end{figure}
On this dataset, we first randomly assign attributes to reasonable roles, such as inadmissible or admissible, and then randomly remove several attributes from the admissible attribute set to simulate scenarios where some meaningful and usable information is lost. We next apply \latent\ and the baselines to the resulting dataset and reconstruct a DAG from each processed output using the Python libraries \textit{pgmpy.estimators} and \textit{networkx}. These recovered DAGs provide a direct view of which causal relationships are preserved after processing.

Figure~\ref{fig: dag} shows one representative setting, where $\{V_0\}$ is sensitive, $\{V_2,V_4\}$ are inadmissible, $\{V_1,V_3,V_5\}$ are admissible, and $Y$ is the label, with $V_1$ and $V_5$ removed to simulate missing attributes. As observed, CausalPre in Figure~\ref{fig: dag_lazy} removes all unfair pathways and preserves observed $V_3$ as a decision factor. OTClean in Figure~\ref{fig: dag_ot} and Cap-MF in Figure~\ref{fig: dag_mf} yield a similar structure but introduce two additional edges, which are highlighted in purple. In contrast, Cap-MS fails to block all unfair causal pathways, as shown in Figure~\ref{fig: dag_ms}, because it substantially distorts the data distribution during processing. This also explains why its results are always invalid in Section~\ref{subsec: exp-endtoend}. Figure~\ref{fig: dag_otrt} reports OTClean-RT. Since it targets a different processing constraint, we evaluate it under a corresponding criterion that focuses on severing dependence between sensitive and inadmissible attributes. Under this criterion, OTClean-RT successfully breaks this dependence, but it still introduces the same two extra edges as OTClean and Cap-MF. Overall, none of the baselines recover the potential utility-relevant information associated with the missing attributes.

\latent\ exhibits a different behavior by explicitly recovering the lost information. As shown in Figure~\ref{fig: dag_latent}, it introduces a latent attribute $L$ that contributes to prediction alongside $V_3$. This latent attribute captures the influence of the missing attributes $V_1$ and $V_5$ by reinstating their direct relationships with other attributes such as $V_2$ and $V_4$, and by passing the recovered information to the label. Consequently, \latent\ restores the missing causal pathways, as if $V_1$ and $V_5$ were still present. Achieving this recovery requires a deliberate trade-off. To preserve more utility-relevant information near the label, \latent\ modifies certain relationships around the inadmissible attributes: it removes the edge between $V_2$ and $V_4$ for identifiability and introduces an additional edge $V_3 \rightarrow V_2$ to support a coarser but tractable policy. Strictly speaking, $V_3$ should be treated as an additional attribute rather than an admissible one, in which case the edge $V_3 \rightarrow V_2$ would not be introduced. We treat $V_3$ as admissible here only to allow the baselines to operate more effectively on this simple dataset. This trade-off proves beneficial: as evidenced by our further end-to-end evaluations, \latent\ consistently outperforms the baselines, achieving an average utility improvement of 22\% across both classifiers.

\begin{figure}[t]
    \centering

    \begin{minipage}[t]{0.04\linewidth}
        \centering
        \begin{subfigure}[t]{0.9\linewidth}
            \hspace*{-2.5mm}\includegraphics[width=\linewidth, trim=0mm -2mm 0mm 0mm, clip]{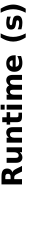}
        \end{subfigure}
        
        \vspace{7.5mm}
        
        \begin{subfigure}[t]{\linewidth}
            \hspace*{-2.5mm}\includegraphics[width=0.9\linewidth, trim=0mm -2mm 0mm 0mm, clip]{figure/runtime_legend.pdf}
        \end{subfigure}
        
        \vspace{7.5mm}
        
        \begin{subfigure}[t]{\linewidth}
            \hspace*{-2.5mm}\includegraphics[width=0.9\linewidth, trim=0mm -2mm 0mm 0mm, clip]{figure/runtime_legend.pdf}
        \end{subfigure}
    \end{minipage}
    %
    %
    \begin{minipage}[t]{0.37\linewidth}
        \centering
        \begin{subfigure}[t]{\linewidth}
            \centering
            \hspace*{-5.5mm}\includegraphics[width=\linewidth]{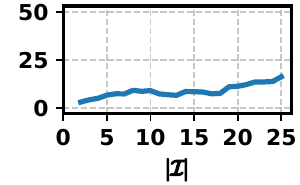}
            \vspace{-1mm}
            \caption{}
            \label{fig: staged-ident-i}
        \end{subfigure}
        
        \vspace{3mm}
        
        \begin{subfigure}[t]{\linewidth}
            \centering
            \hspace*{-5.5mm}\includegraphics[width=\linewidth]{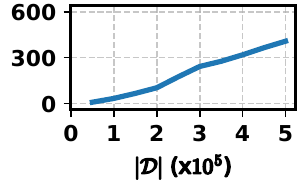}
            \vspace{-1mm}
            \caption{}
            \label{fig: staged-ident-n}
        \end{subfigure}
        
        \vspace{3mm}
        
        \begin{subfigure}[t]{\linewidth}
            \centering
            \hspace*{-5.5mm}\includegraphics[width=\linewidth]{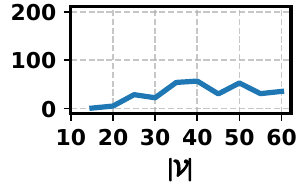}
            \vspace{-1mm}
            \caption{}
            \label{fig: staged-ident-v}
        \end{subfigure}
    \end{minipage}
    \hspace{2mm}
    %
    %
    \begin{minipage}[t]{0.37\linewidth}
        \centering
        \begin{subfigure}[t]{\linewidth}
            \centering
            \hspace*{-5.5mm}\includegraphics[width=\linewidth]{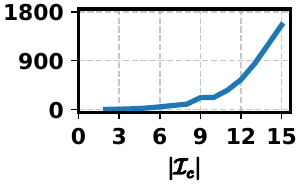}
            \vspace{-1mm}
            \caption{}
            \label{fig: staged-part-i}
        \end{subfigure}
        
        \vspace{3mm}
        
        \begin{subfigure}[t]{\linewidth}
            \centering
            \hspace*{-5.5mm}\includegraphics[width=\linewidth]{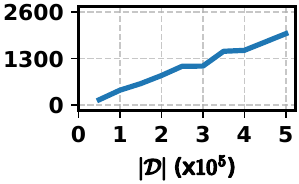}
            \vspace{-1mm}
            \caption{}
            \label{fig: staged-part-n}
        \end{subfigure}
        
        \vspace{3mm}
        
        \begin{subfigure}[t]{\linewidth}
            \centering
            \hspace*{-5.5mm}\includegraphics[width=\linewidth]{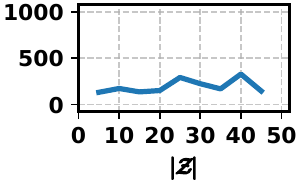}
            \vspace{-1mm}
            \caption{}
            \label{fig: staged-part-v}
        \end{subfigure}
    \end{minipage}

    \caption{Staged runtime analysis for $\mathcal{I}_c$ identification algorithm ((a)--(c)) and $\mathcal{I}_c$ partition algorithm ((d)--(f)).}
    \label{fig: staged}
\end{figure}

\subsection{Verifying Latent--Sensitive Independence} \label{subsec: exp-indep}

As mentioned in Section~\ref{subsec: method-framework}, the latent attribute $L$ is designed to be independent of the sensitive attributes $\mathcal{S}$. To empirically validate this property, we test whether the learned $L$ exhibits any residual dependence on $\mathcal{S}$ from both information-theoretic and statistical perspectives, using Normalized Mutual Information (NMI) and the chi-square test. NMI measures the strength of dependence, where a value of $0$ indicates independence and values below $10^{-3}$ are commonly treated as negligible in practice. The chi-square test assesses statistical evidence of dependence: the resulting $p$-values above $0.01$ indicate no statistically significant dependence. Table~\ref{tab: indep_analysis_shorten} presents the 5-fold results on three real-world datasets under two imperfect scenarios as well as the perfect scenario. Since each dataset includes two sensitive attributes, \texttt{sex} and \texttt{race}, we test dependence with respect to each attribute separately as well as their intersectional groups. Across all settings, NMI values are near zero and chi-square $p$-values consistently exceed the threshold. Both results indicate no detectable dependence between $L$ and $\mathcal{S}$, which is consistent with our design principle.

\subsection{Parameter Analysis} \label{subsec: exp-param}

We study the sensitivity of \latent\ to its key hyperparameters. For brevity, we report results on the representative Adult dataset with the MLP classifier; similar trends hold across other datasets and classifiers. As described in Algorithm~\ref{algo: complete}, \latent\ has three parameters: the number of latent states $\tau$, the maximum number of iterations $n$, and the convergence threshold $\eta$. Unless otherwise stated, when varying one hyperparameter, we fix the other two to $\tau=6$, $n=800$, and $\eta=0.001$.

\vspace{1mm}
\noindent\textbf{Varying $\tau$.}
Figure~\ref{fig: param_l} shows the impact of varying the number of latent states $\tau$. When $\tau=1$, all records share a single latent value, which is equivalent to having no latent attribute. As $\tau$ increases and captures more variation, utility improves consistently, with gains of 14.66\% and 3.68\% under the ambiguous and absent attribute settings, respectively. The gains are more pronounced under attribute ambiguity, since the ambiguous attributes remain in the dataset, making their influence easier to recover in the early stages. As $\tau$ continues to grow, utility improves marginally while ROD remains stable. Due to identifiability constraints, $\tau$ must lie within a bounded range; for Adult, we set the maximum to 6.

\vspace{1mm}
\noindent\textbf{Varying $n$ and $\eta$.}
Figure~\ref{fig: param_n} illustrates performance changes with the number of iterations $n$. 
Across both scenarios, utility increases in the early iterations and then stabilizes, while fairness remains steady throughout. A similar pattern is observed on the convergence threshold $\eta$, as shown in Figure~\ref{fig: param_thr}, within the expected noise levels.

\begin{figure*}[t]
    \centering
    \includegraphics[width=0.65\linewidth, keepaspectratio]{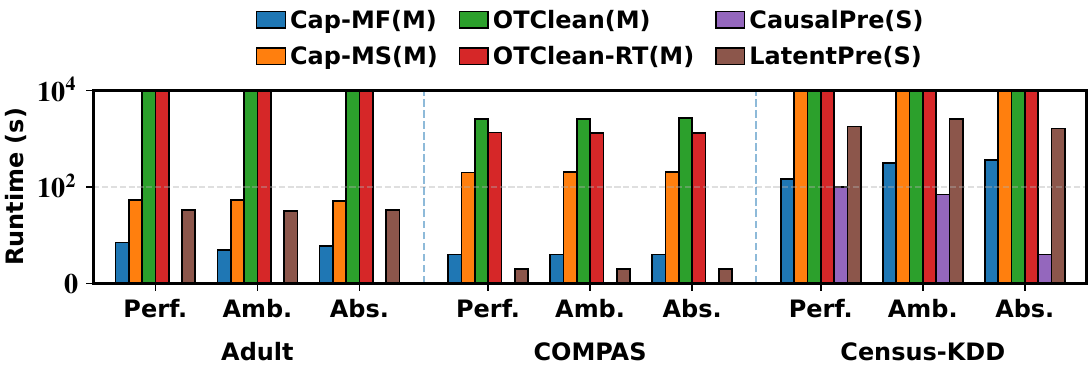}
    \caption{Runtime comparison. (M: execution with 64 threads; S: execution with a single thread.)}
    \vspace{2mm}
    \label{fig: runtime}
\end{figure*}

\subsection{Runtime Evaluation} \label{subsec: exp-time}

\subsubsection{Staged}\label{subsec: exp-time-staged}%
To evaluate the efficiency of the identification and partition algorithms proposed in Sections~\ref{subsec: method-identification} and~\ref{subsec: method-partition}, we study how runtime scales with key dataset properties using synthetic data. For identification, we vary one factor at a time among the inadmissible-set size $|\mathcal{I}|$, the dataset size $|\mathcal{D}|$, and the number of attributes $|\mathcal{V}|$, as shown in Figures~\ref{fig: staged-ident-i}--\ref{fig: staged-ident-v}, while fixing the other two at $|\mathcal{I}|{=}10$, $|\mathcal{D}|{=}50{,}000$, and $|\mathcal{V}|{=}30$. As shown, increasing $|\mathcal{I}|$ or $|\mathcal{V}|$ leads only to a mild runtime increase with small fluctuations. This indicates that CI tests executed in Lines~\ref{line: g start}--\ref{line: g end} of Algorithm~\ref{algo: identification} are far fewer than the worst-case count. In practice, the procedure typically terminates early rather than enumerating all conditioning sets. When $|\mathcal{D}|$ increases, the runtime grows linearly, as each CI test becomes more expensive with more records. For partition, we similarly vary one factor at a time among $|\mathcal{I}_c|$, $|\mathcal{D}|$, and the conditioning-set size $|\mathcal{Z}|{=}| \mathcal{S}\cup\mathcal{I}_o\cup\mathcal{A}|$, as shown in Figures~\ref{fig: staged-part-i}--\ref{fig: staged-part-v}, while fixing the remaining two at $|\mathcal{I}_c|{=}10$, $|\mathcal{D}|{=}50{,}000$, and $|\mathcal{Z}|{=}20$. The partition runtime scales quadratically with $|\mathcal{I}_c|$, grows linearly with $|\mathcal{D}|$, and shows only mild sensitivity to $|\mathcal{Z}|$. Overall, the observed trends are well below the theoretical worst-case bound discussed in Section~\ref{subsec: method-partition} and support practical scalability.

\subsubsection{Overall}\label{subsec: exp-time-overall}
Figure~\ref{fig: runtime} reports the runtime of \latent\ and the baselines. \latent\ and CausalPre use a single thread, while the others use 64 threads. Bars reaching the top indicate failure to complete. CausalPre is sometimes not visible because its runtime falls below the plotted scale, under 2 seconds. Overall, among the six approaches, \latent\ typically ranks third, after CausalPre and Cap-MF. In the most challenging setting, Census-KDD under the ambiguous scenario, \latent\ completes in 2{,}573 seconds, which remains acceptable for an offline pre-processing task.

\balance
\section{Related Work} \label{sec: related}

Our work focuses on justifiable fairness and is, to our knowledge, the first to address fair data pre-processing under an imperfect attribute space. All existing works assume that the available attributes are complete and correctly specified (as admissible, inadmissible, etc.), and are therefore not designed to tackle this new challenge. 

Salimi et al. \cite{salimi2019interventional} are the first to formalize the fairness-aware database repair problem and define the notion of justifiable fairness. Their framework, Capuchin, aims to achieve fairness while minimizing the number of record modifications. Building upon Capuchin, OTClean~\cite{pirhadi2024otclean} further controls distributional distortion via optimal transport, seeking to minimize distribution shift in addition to enforcing fairness. More recently, CausalPre~\cite{zheng2025causalpre} adopts a two-step strategy: it efficiently learns causally fair relationships from scratch and applies them to refine the dataset, ensuring fairness while preserving the overall statistical character. 

Several other pre-processing techniques~\cite{pirhadi2024otclean, pujol2023prefair, salazar2021automated, galhotra2022causal, robertson2025fairpfn} also adopt causal fairness, but target different problems. PreFair~\cite{pujol2023prefair} also builds on justifiable fairness, but does not consider causality; instead, it focuses on incorporating fairness constraints into private data synthesis, which differs from our repair-oriented framework. FairPFN~\cite{robertson2025fairpfn} studies fairness-aware pre-training with a fair data generation module. It aims to eliminate the total causal effect of a single binary sensitive attribute, and is therefore not directly comparable to our setting. FairExp~\cite{salazar2021automated} and SeqSel~\cite{galhotra2022causal}, on the other hand, focus on fair feature selection rather than data adjustment. 
Beyond causal fairness, several pre-processing methods~\cite{feldman2015certifying, nabi2018fair, zemel2013learning, xu2018fairgan, kamiran2012data, xiong2024fairwasp, calmon2017optimized, gordaliza2019obtaining, van2021decaf} instead target associational fairness, especially the notion of demographic parity. Since they enforce only associational constraints, they generally cannot guarantee causal fairness~\cite{salimi2019interventional}.

Finally, some works~\cite{tian2025towards, ma2023learning, russell2017worlds, kusner2017counterfactual} also attempt to explore unobserved signals from the data. EXCO~\cite{tian2025towards}, for example, introduces latent attributes to represent exogenous background factors that influence the sensitive attribute and uses them to develop a fair predictor. CLAIRE~\cite{ma2023learning} does not explicitly model latent attributes but instead learns fair representations with a variational autoencoder. These methods operate during model training, whereas our work addresses fairness through the pre-processing stage.

\section{Conclusion} \label{sec: conclusion}

In this paper, we re-examined the problem of fair data pre-processing under an imperfect attribute space. Our framework, \latent{}, augments the fairness policy with latent attributes that guarantee identifiability, recover valid missing signals, and block inadmissible influence. Guided by this policy, the raw data is adjusted to satisfy justifiable fairness while preserving utility. Extensive experiments show that \latent{} is robust across a variety of real-world imperfect settings and achieves strong fairness-utility trade-offs.

\clearpage

\bibliographystyle{ACM-Reference-Format}
\bibliography{main}

\end{document}